\newif\ifsubmission
\newtheorem{theorem}{Theorem}
\newtheorem{lemma}{Lemma}
\newtheorem{corollary}{Corollary}
\newtheorem{definition}{Definition}
\newtheorem{property}{Property}
\newcommand{\rarrow}{\tiny \to}
\newcommand{\dfn}{\stackrel{\triangle}{=}}
\newcommand {\exe} {\stackrel{\cdot} {=}}
\newcommand {\lexe} {\stackrel{\cdot} {\le}}
\newcommand {\reals} {{\rm I\!R}}
\newcommand {\bx} {\mbox{\boldmath $x$}}
\newcommand {\by} {\mbox{\boldmath $y$}}
\newcommand {\bX} {\mbox{\boldmath $X$}}
\newcommand{\calA}{{\cal A}}
\newcommand{\calC}{{\cal C}}
\newcommand{\calD}{{\cal D}}
\newcommand{\calH}{{\cal H}}
\newcommand{\calS}{{\cal S}}
\newcommand{\calT}{{\cal T}}
\renewcommand{\SS}{\mathcal S}
\colorlet{mygreen}{green!60!gray}
\begin{document}

\title{Detection Games Under Fully Active
Adversaries}
%
\author{
Benedetta Tondi,  \IEEEmembership{Member, IEEE},  Neri Merhav,  \IEEEmembership{Fellow, IEEE}, Mauro Barni  \IEEEmembership{Fellow, IEEE}
\thanks{M. Barni and B. Tondi are with the
Department of Information Engineering and Mathematical Sciences,
University of Siena,  Siena, ITALY, e-mail: \{benedettatondi@gmail.com, barni@dii.unisi.it\};
N. Merhav is with the the Andrew
and Erna Viterbi Faculty of Electrical Engineering - Israel Institute of Technology Technion City, Haifa,
ISRAEL,
email: \{merhav@ee.technion.ac.il\}.}
}

\markboth{IEEE TRANSACTIONS ON INFORMATION THEORY, ~Vol.~X, No.~X, XXXXXXX~XXXX}%
{B. Tondi,  N.Merhav, M. Barni: Detection Games Under Fully Active
Adversaries}

\maketitle

\begin{abstract}
%
We study a binary hypothesis testing problem in which a defender must decide whether or not a 
test sequence has been drawn from a given memoryless source $P_0$ whereas,
an attacker strives to impede the correct detection.
With respect to previous works, the adversarial setup
addressed in this paper considers an attacker who is
active under both hypotheses, namely, a fully active attacker, as opposed
to a partially active attacker  who
is active under one hypothesis only.
In the fully active setup, the attacker distorts
sequences drawn both from $P_0$ and from an
alternative memoryless source $P_1$, up to a certain distortion level,
which is possibly different under the two hypotheses,
in order to maximize the confusion in
distinguishing between the two sources, i.e.,
to induce
both false positive and  false negative
errors
at the detector, also referred to
as the defender.
We model the defender-attacker interaction
as a game and study two versions of this game, the Neyman-Pearson game
and the Bayesian game.
Our main result
is  in the characterization of
an attack strategy that is asymptotically both dominant
(i.e., optimal no matter what
the defender's strategy is) and universal, i.e., independent of
 $P_0$ and $P_1$.
From the analysis of the equilibrium payoff, we
also derive the best achievable performance of the defender,
by relaxing the requirement on the exponential
decay rate of the false positive error probability
in the Neyman--Pearson setup and the  tradeoff
between the error exponents in the Bayesian setup.
Such analysis permits to
characterize the conditions for the distinguishability of the two sources
given the distortion levels.
\end{abstract}

\begin{IEEEkeywords}
Adversarial signal processing, binary hypothesis testing, statistical detection theory, game theory, the method of types.
\end{IEEEkeywords}


\IEEEpeerreviewmaketitle

\section{Introduction}
\label{sec.intro}

There are many fields in signal processing and communications where the detection
problem should naturally be framed within an adversarial setting: multimedia forensics (MF) \cite{Boh12}, spam filtering   \cite{Lowd05},
biometric-based verification \cite{jain2005biometric},
one-bit watermarking \cite{TheGaussWatermGame},
and digital/analogue transmission under jammer attacks \cite{tan2002undermining},
just to name a few
(see \cite{barni2013coping} for other examples).

%
%
In particular,
the need for adversarial modeling has become evident
in security-related applications and  game theory is
often  harnessed as a useful tool
in  many research areas,
such as steganalysis \cite{ker2007batch},
watermarking \cite{TheGaussWatermGame},
intrusion detection systems \cite{wu2012anti}
and adversarial machine learning \cite{DDMSV04,goodfellow2014generative}.
%
%
In recent literature, game theory and information theory  have also been
combined to address the problem of adversarial detection,
especially in the field of digital watermarking,
see, for instance, \cite{Sullican_Moulin_98,TheGaussWatermGame,Sullican_Moulin_03, SMM04}.
In all these works, the problem of designing
watermarking codes that are robust to
intentional attacks, is studied as a game between the information hider
and the attacker.

An attempt to develop
a general theory for the binary hypothesis testing problem in
the presence of an adversary was made in \cite{BT13}.
Specifically, in \cite{BT13} the general problem of binary
decision under adversarial conditions has been
addressed
and formulated as a game between two players, the {\it defender}
and the {\it attacker}, which have conflicting
goals.
Given two discrete memoryless sources, $P_0$ and $P_1$,
the goal of the defender is to decide whether
a given
test sequence has been generated by $P_0$ (null hypothesis,
$\calH_0$) or $P_1$ (alternative hypothesis, $\calH_1$).
By adopting the Neyman-Pearson approach, the set of
strategies the defender can choose
from is the set of decision regions for $\calH_0$ ensuring that the false positive error probability is lower than a given threshold.
On the other hand, the ultimate goal of the
attacker in \cite{BT13} is to cause a false negative
decision,
so the attacker acts under $\calH_1$ only. In other words, the attacker
modifies
a sequence generated by $P_1$,
in attempt to move it into the acceptance region of
$\calH_0$.
The attacker is subjected to a distortion constraint,
which limits his freedom in doing so.
%
Such a struggle between the defender and the attacker is modeled in \cite{BT13} as a competitive zero-sum game
and the asymptotic equilibrium, that is, the equilibrium when the
length of the observed sequence tends to infinity, is derived under the
assumption that the defender bases his decision on the
analysis of first order statistics only.
In this respect, the analysis conducted in \cite{BT13} extends the one of
\cite{merhav_sabbag_08} to the adversarial scenario.
Some variants of this attack-detection game have also been
studied:
in \cite{BTtit}, the setting was extended to the case
where the sources are known to neither the defender nor the attacker, yet
training data from both sources is available to both parties:
within this framework, the case where  part of the
training data available to the defender is corrupted by the attacker
has also been studied (see \cite{BTtit18}).

There are many situations
in which it is reasonable to assume that the attacker is active
under both  hypotheses with the goal of causing both false positive and false negative detection
errors.
For instance, in applications of camera fingerprint
detection,
an adversary might be interested to remove the fingerprint from a given image
so that the generating camera would not
be identified and at the same time,
to implant the fingerprint from another camera \cite{Chen08,Gol11}.
Another example comes from watermarking, where an attacker can be interested in either removing or injecting the watermark from an image or a video, 
to redistribute the content with a fake copyright and no information
(erased information) about the true ownership \cite{moulin2001game}.
Attacks
under both hypotheses may also be present in applications of network intrusion detection \cite{corona2013adversarial}.
Network intrusion detection systems, in fact, can be subject to both evasion attacks  \cite{vigna2004testing}, in which an adversary tries to avoid detection by manipulating malicious traffic,
and overstimulation attacks \cite{patton2001achilles,mutz2005reverse}, in which the network is overstimulated by an adversary who sends synthetic traffic (matching the legitimate traffic)
in order to cause a denial of service.

With the above ideas in mind, in this paper,
we consider the game--theoretic
formulation of the defender-attacker interaction when the attacker
acts under both hypotheses.
We refer to this scenario as a detection game with
a {\it fully active attacker}. By contrast, when the attacker acts under
hypothesis $\calH_1$ only (as in \cite{BT13} and \cite{BTtit}), he is referred to as
a {\it partially active attacker}.
%
A distinction is made between the case where the underlying hypothesis
is known to the attacker and the case where it is not. A little thought,
however, immediately indicates that the latter is a special case of the former, and
therefore, we focus on the former.
We define and solve two versions of the {\em detection game with fully active
attackers}, corresponding to two different formulations of the
problem: the Neyman--Pearson formulation and the Bayesian formulation.
%
%
%
In contrast to \cite{BT13},  here the players are
allowed to adopt randomized strategies. Specifically, the defender adopts
{\em randomized decision} strategies, while in
\cite{BT13} the defender's strategies were
confined to deterministic decision rules.
As for the attack,
it consists of the application of a {\em channel},
whereas in  \cite{BT13} it was confined to the application of a
deterministic function.
Moreover, the partially active case of \cite{BT13} can easily be obtained as a
special case of the fully active case considered here.
The problem of solving the game and then finding the optimum detector in the adversarial setting is not trivial and may not be possible in general.
Thus, we limit the complexity of the problem and make the analysis tractable by confining the decision to depend on a given set of
statistics of the observation.  Such an assumption, according to which the detector has access to a limited
 set of empirical statistics of the sequence,
is referred to as {\it limited resources} assumption (see \cite{merhav_sabbag_08} for an introduction on this terminology).
In particular, as done in \cite{BT13,BTtit},
we  limit the
detection resources to first order statistics, which are, as is well known,
sufficient statistics for memoryless systems \cite[Section 2.9]{CandT}.
While the sources are indeed assumed memoryless, one might still be concerned regarding the
sufficiency of first order statistics, in our setting, since
the attack channel
is not assumed memoryless in the first place. Adopting, nonetheless, the limited--resources
assumption to first order statistics,
is motivated mainly by its simplicity, but with the
understanding that the results can easily be extended to deal with arbitrarily higher
order empirical statistics as well. Moreover, an important bonus of this framework is that it
allows us to obtain fairly strong results concerning the game between the
defender and the attacker, as will be described below.

One of the main results of this paper
is the characterization of
an attack strategy which is both {\em dominant} (i.e.,
optimal no matter what the defence strategy is), and {\em universal}, i.e.,
independent of the (unknown) underlying sources.
Moreover, this optimal attack is
the same for both the Neyman-Pearson and Bayesian
games.
This result continues to hold also
for the partially active
case, thus creating a significant difference relative to previous works,
where the existence of a dominant strategy was
established regarding the defender only.


Some of our results
(in particular, the derivation of the equilibrium point for both
 the Neyman--Pearson and the Bayesian games),
have already appeared mostly without proofs in \cite{TBM_wifs15}.
Here we provide the full proofs of the
main theorems,
evaluate the payoff at
equilibrium for both the Neyman--Pearson and Bayesian games and
include the analysis of the ultimate performance of the games. Specifically,
%
we characterize the so called indistinguishability region (to be defined formally in Section \ref{sec.SA_limitingPerf}), namely
the set of the sources for which it is not possible to attain strictly positive exponents for both false positive and false negative probabilities under the Neyman-Pearson and the Bayesian settings.
Furthermore, the
setup and analysis presented in \cite{TBM_wifs15}
is extended by considering a more general case in which the
maximum allowed distortion levels
the attacker may introduce under the two hypotheses are different.

The paper is organized as follows.
In Section \ref{sec.not&def}, we
establish the notation and introduce the main concepts. In Section \ref{sec.DG_not_sym},
we formalize the problem and define the detection game with a fully active
adversary for both the Neyman-Pearson and the Bayesian
games, and then prove
the existence of a dominant and universal attack strategy.
The complete analysis of the Neyman-Pearson and Bayesian detection games,
namely, the study of the equilibrium point of the game and the
computation of the payoff at the equilibrium, are carried
out in
Sections \ref{sec.SA_NP} and \ref{sec.SA_B}, respectively.
Finally, Section \ref{sec.SA_limitingPerf} is devoted to
the analysis of the best achievable performance of the defender
and the characterization of the source distinguishability.

\section{Notation and definitions}
\label{sec.not&def}

Throughout the paper, random variables will be denoted by capital
letters
and specific realizations will be denoted by the
corresponding lower case letters.
All random variables that
denote signals in the system, will be assumed to have the same finite
alphabet, denoted by $\calA$.
Given a random variable $X$ and a
positive integer $n$,
we denote by
$\bX = (X_1,X_2,...,X_n)$, $X_i \in \mathcal{A}$,
$i=1,2,\ldots,n$,
a sequence of $n$ independent copies of $X$. According to the
above--mentioned notation rules, a specific realization of
$\bX$  is denoted by $\bx=(x_1,x_2,\ldots,x_n)$.
Sources will be denoted by the letter $P$. Whenever necessary, we will
subscript $P$ with the name of the relevant random variables:
given a random variable $X$,
$P_X$ denotes its
probability mass function (PMF). Similarly,
$P_{XY}$ denotes the joint PMF
of a pair of random variables, $(X,Y)$.
For two positive sequences,
$\{a_n\}$ and $\{b_n\}$, the notation $a_n \exe b_n$ stands for exponential
equivalence, i.e.,
$\lim_{n\to\infty} 1/n \ln\left(a_n/b_n\right)=0$, and $a_n \lexe b_n$
designates that $\limsup_{n\to\infty} 1/n \ln\left( a_n/b_n\right)\leq 0$.
For a given real $s$, we denote $[s]_+\dfn\max\{s,0\}$. We use notation $U(\cdot)$ for the Heaviside step function.

The type of a sequence
${\bx} \in \calA^n$ is defined as the empirical probability
distribution $\hat{P}_{\bx}$, that is,
the vector $\{\hat{P}_{\bx}(x), ~ x \in \mathcal{A}\}$ of
the relative frequencies of the various alphabet symbols in $\bx$.
A type class $\calT(\bx)$ is defined as the set of all sequences
having the same type as ${\bx}$.
When we wish to emphasize
the dependence of $\calT(\bx)$ on $\hat{P}_{\bx}$, we will use the
notation $\calT(\hat{P}_{\bx})$. Similarly, given a pair of sequences
$(\bx, \by)$, both of length $n$,
the joint type class $\calT(\bx, \by)$ is the set of sequence pairs $\{(\bx',\by')\}$ of length $n$
having the same empirical joint probability distribution
(or joint type) as $(\bx,\by)$,
$\hat{P}_{\bx \by}$, and
the conditional type class $\calT(\by|\bx)$ is the set of
sequences $\{\by'\}$  with $\hat{P}_{\bx\by'}=\hat{P}_{\bx\by}$.

Regarding information
measures,
the entropy associated with $\hat{P}_{\bx}$, which
is the empirical entropy of ${\bx}$, is denoted by
$\hat{H}_{\bx}(X)$. Similarly,
$\hat{H}_{\bx \by}(X,Y)$  designates the
empirical joint entropy of ${\bx}$ and $\by$,
and $\hat{H}_{\bx \by}(X|Y)$ is the conditional joint entropy.
We denote by $\calD(P \| Q)$
the Kullback--Leibler (K-L) divergence between
two sources, $P$ and $Q$ with the same alphabet (see \cite{CandT}).

Finally, we use letter $A$ to denote an attack channel; accordingly,
$A(\by| \bx)$ is the conditional probability of the channel output $\by$ given the channel input
$\bx$.
Given a permutation-invariant distortion function\footnote{
A permutation--invariant distortion function, $d(\bx,\by)$, is
a distortion function that is invariant if the same
permutation is applied to both $\bx$ and $\by$.}
$d:\calA^n\times\calA^n\to\reals^+$ and a maximum distortion
$\Delta$, we define the class $\calC_{\Delta}$ of admissible channels
$\{A(\by|\bx),~\bx,\by\in\calA^n\}$
as those that
assign zero
probability
to  every $\by$ with $d(\bx,\by) > n\Delta$.\\

\subsection{Basics of Game Theory}
\label{sec.intro_GT}

For the sake of completeness,
we introduce some basic definitions and concepts of game theory.
A two--player game is defined
as a quadruple $(\SS_1,\SS_2,u_1, u_2)$,
where $\SS_1 = \{s_{1,1} \dots s_{1,n_1}\}$ and
$\SS_2 = \{s_{2,1} \dots s_{2,n_2}\}$ are the
sets of strategies from which the first and the second player can choose, respectively, and $u_l(s_{1,i}, s_{2,j}), l= 1,2$,
is the payoff of the game for player $l$, when the first player
chooses the strategy $s_{1,i}$ and the second one chooses
$s_{2,j}$. Each player aims at maximizing its payoff function.
A pair of strategies $(s_{1,i}, s_{2,j})$ is called a {\it
profile}.
When $u_1(s_{1,i}, s_{2,j}) + u_2(s_{1,i}, s_{2,j}) = 0$,
the game is said to be a {\it zero-sum game}. For such games, the payoff
of the game $u(s_{1,i}, s_{2,j})$ is usually defined by adopting the
perspective of one of the two players: that is, $u(s_{1,i}, s_{2,j}) =
u_1(s_{1,i}, s_{2,j}) = - u_2(s_{1,i}, s_{2,j})$ if the defender's perspective
is adopted or vice versa.
The sets $\SS_1$, $\SS_2$
and the payoff functions are assumed known to both players.
In addition, we consider {\it strategic games}, i.e., games in which
the players choose their strategies ahead of
time, without knowing the strategy chosen by the opponent.

A common goal in game theory is to determine the existence of {\it
equilibrium points}, i.e. profiles that in {\em some sense}
represent a {\em satisfactory} choice for both players \cite{Osb94}.
The most famous notion of equilibrium is due to Nash \cite{Nash50}.
A profile is said to be a {\it Nash equilibrium}
if no player can improve its payoff by changing its strategy unilaterally.

Despite its popularity, the practical meaning of Nash equilibrium is often unclear,
since there is no guarantee that the players will end up playing at the
Nash equilibrium.
A particular kind of games for which stronger forms of
equilibrium exist are the so called {\em dominance solvable} games \cite{Osb94}.
The concept of dominance-solvability is directly related to the notion of dominant
and dominated strategies.
In particular, a strategy is said to be {\it strictly dominant} for
one player if it is the best strategy for this player, i.e.,
the strategy  that maximizes the payoff,
no matter
what the strategy of
the opponent may be.
In a similar way, we say that a strategy $s_{l,i}$ is
{\it strictly dominated} by strategy $s_{l,j}$, if the payoff achieved by
player $l$ choosing $s_{l,i}$ is always lower than that obtained by playing
$s_{l,j}$, regardless of the strategy of the other player.
Recursive elimination of dominated strategies
is a common technique for solving games. In the first step,
all the dominated strategies are removed from the set of available strategies,
since no {\it rational} player\footnote{In game theory, a rational player is supposed to act in a way that maximizes its payoff.}
would ever use them. In this way, a new,
smaller game is obtained. At this point, some strategies that were not
dominated before, may become dominated in the
new, smaller version of the game, and hence are eliminated as well. The process
goes on until no dominated strategy exists for  either player.
A {\em rationalizable equilibrium} is any profile which survives the
iterated elimination of dominated strategies \cite{ChenGames,Bern84}.
If at the end of the process only one profile is left, the remaining profile is said to be the
{\em only rationalizable equilibrium} of the game, which is also the only Nash equilibrium point.
Dominance solvable games are easy to analyze since, under the assumption of
rational players, we can anticipate that the players will choose the strategies
corresponding to the unique rationalizable equilibrium.
Another, related,  interesting notion of equilibrium is that of {\it dominant equilibrium}.
A dominant equilibrium is a profile which corresponds to dominant strategies
for both players and is the strongest kind of equilibrium that a strategic game may have.\\

\section{Detection Game with Fully Active Attacker}
\label{sec.DG_not_sym}


\subsection{Problem formulation}

Given two discrete memoryless sources, $P_0$ and $P_1$,
defined over
a common finite alphabet $\mathcal{A}$, we denote by
$\bx = (x_1,\ldots,x_n)\in \mathcal{A}^n$
a sequence emitted by one of these
sources. The sequence $\bx$ is available to the attacker.
Let $\by = (y_1,y_2,...,y_n) \in \mathcal{A}^n$
denote the sequence observed by the
defender:
when an attack occurs under both  $\calH_0$ and $\calH_1$, 
the observed sequence $\by$ is obtained as the output of an attack channel
fed by $\bx$.

In principle, we must distinguish between two cases: in the first, the attacker
is aware of the underlying hypothesis (hypothesis-aware attacker),
whereas in the second case it is not
(hypothesis-unaware attacker).
%
In the hypothesis-aware case,
the attack strategy is defined by two different conditional probability
distributions, i.e., two different attack channels: $A_0(\by|\bx)$, applied when
$\calH_0$ holds, and $A_1(\by|\bx)$, applied under $\calH_1$.
Let us denote by $Q_i(\cdot)$
the PMF of
$\by$ under  $\calH_i$,$i=0,1$. The attack induces the
following  PMFs on $\by$:
$Q_0(\by) = \sum_{\bx} P_0(\bx) A_0(\by|\bx)$ and
$Q_1(\by) = \sum_{\bx} P_1(\bx)
A_1(\by|\bx)$.

Clearly, in the hypothesis-unaware case, the attacker will apply the same
channel under $\calH_0$ and $\calH_1$, that is, $A_0=A_1$, and we will denote the common attack channel simply by $A$.
Throughout the paper, we focus on the
hypothesis-aware case as in view of this formalism, the
hypothesis-unaware case is just a special case.

Regarding
the defender,
we assume a
randomized decision strategy,
defined by $\Phi(\calH_i|\by)$, which
designates the probability of deciding in favor of $\calH_i$, $i=0,1$, given
$\by$.
Accordingly, the probability of a {\it false positive} (FP) decision error is given by
\begin{equation}
\label{P_FP}
P_{\mbox{\tiny FP}}(\Phi,A_0) = \sum_{\by}Q_0(\by)\Phi(\calH_1|\by),
\end{equation}
and similarly, the {\it false negative} (FN)
probability assumes the form:
\begin{equation}
\label{P_FN}
P_{\mbox{\tiny FN}}(\Phi,A_1) = \sum_{\by}Q_1(\by)\Phi(\calH_0|\by).
\end{equation}
\begin{figure}[t!]
\centering \includegraphics[width =0.48 \paperwidth]{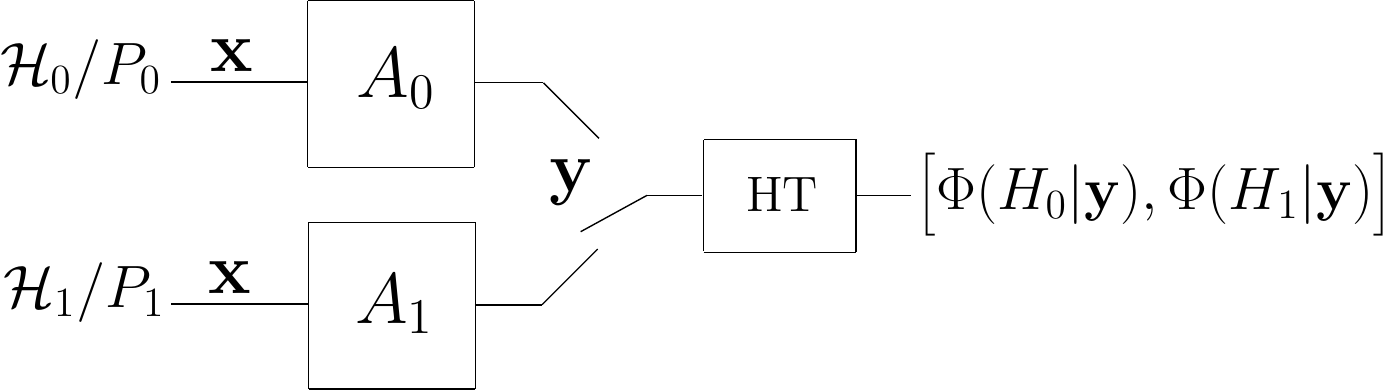}  
\caption{Schematic representation of the adversarial setup considered in this paper. In the case of  partially active attacker, channel $A_0$ corresponds to the identity channel.}
\label{fig.ADVsetup}
\end{figure}

Figure \ \ref{fig.ADVsetup} provides a
block diagram of the system with a fully active
attacker.
Obviously, the
partially active case, where
no attack occurs under $\calH_0$, can be seen as a
degenerate case of the fully active one, where $A_0$ is the identity channel $I$.
%
As in \cite{BT13}, due to the limited resources
assumption, the defender makes a decision based
on first order empirical statistics of $\by$,
which implies that
$\Phi(\cdot|\by)$ depends on ${\by}$ only via its type class
$\calT(\by)$.
%

Concerning the attack,
in order to limit the amount of distortion, we
assume a distortion constraint.
In the hypothesis--aware case, we allow the attacker different distortion
levels, $\Delta_0$ and $\Delta_1$, under $\calH_0$ and $\calH_1$,
respectively.
Then,
$A_0\in\mathcal{C}_{\Delta_0}$ and
$A_1\in\mathcal{C}_{\Delta_1}$,
where, for simplicity, we assume that a common (permutation-invariant) distortion function $d(\cdot,\cdot)$ is adopted in the two cases.

\subsection{Definition of the Neyman--Pearson and Bayesian Games}
\label{sec.def_game-NP-B}

One of the difficulties associated with the fully active
setting is that,
in the presence of a fully active attacker,
both the FP and FN probabilities depend on the attack channels.
We therefore consider
two different approaches which
lead to different formulations of the
detection game: in the first, the detection game is based on the Neyman-Pearson
criterion,
and in the second one, the Bayesian
approach is adopted.

For the Neyman-Pearson setting,
we define the game by assuming that
the defender adopts a conservative approach
and imposes an FP constraint pertaining to
the worst--case attack under $\calH_0$.
%
%
\begin{definition}
\label{def.NPgame}
The Neyman-Pearson detection game
is a zero-sum, strategic game defined as follows.
\begin{itemize}
  \item
The set $\mathcal{S}_{D}$ of strategies allowed to the defender
is the
class
of randomized decision rules $\{\Phi\}$ that satisfy
\begin{description}
  \item[(i)]
$\Phi(\calH_0|\by)$ depends on $\by$ only via its type.
  \item[(ii)] $\max_{A_0 \in \calC_{\Delta_0}} P_{\mbox{\tiny FP}}(\Phi,A_0)
\le e^{-n\lambda}$ for a prescribed constant
$\lambda > 0$, independent of $n$.
\end{description}
  \item
The set  $\calS_A$  of strategies allowed to the attacker
is the class  of
pairs of attack channels $(A_0,A_1)$ such that
$A_0 \in \calC_{\Delta_0}$, $A_1 \in \calC_{\Delta_1}$; that is,
$\calS_A=\calC_{\Delta_0}\times\calC_{\Delta_1}$.

    \item The payoff of the game is $u(\Phi,A_1)= P_{\mbox{\tiny FN}}(\Phi,A_1)$;
the attacker is in the quest of maximizing
$u(\Phi,A_1)$ whereas the defender wishes to minimize it.
\end{itemize}
\end{definition}
%
In the above definition, we require that the FP
probability decays exponentially fast with $n$, with
 an exponential rate {\em at least} as large as $\lambda$.
%
In the case of partially--active attack
(see the formulation in \cite{TBM_wifs15}), the FP probability
does not depend on the attack
but on the defender only;
accordingly, the constraint imposed by the defender in the above formulation
becomes $P_{\mbox{\tiny FP}}(\Phi) \le e^{-n\lambda}$.
Regarding the attacker, we have $\calS_A \equiv \calC_{0}
\times \calC_{\Delta_1}$, where $\calC_{0}$ is a singleton that
contains the identity channel only.

Another version of the detection game
is defined by assuming that the
defender follows a less conservative approach, that is, the Bayesian approach,
and tries to minimize a particular Bayes risk.
%
\begin{definition}
The Bayesian detection 
game is a zero-sum, strategic game defined as follow.
\begin{itemize}
\label{def.Bgame}
  \item The set $\mathcal{S}_{D}$ of strategies allowed to the defender is
the class
of the randomized decision rules $\{\Phi\}$
where $\Phi(\calH_0|\by)$ depends on $\by$ only via its type.
  \item The set $\mathcal{S}_{A}$ of strategies allowed
to the attacker is $\calS_A=\calC_{\Delta_0}\times\calC_{\Delta_1}$.
  \item The payoff of the game is
\begin{equation}
\label{payoff_soft_bayesian}
u(\Phi, (A_0,A_1)) = P_{\mbox{\tiny FN}}(\Phi,A_1) + e^{a n} P_{\mbox{\tiny FP}}(\Phi,A_0),
\end{equation}
for some constant $a$, independent of $n$.
\end{itemize}
\end{definition}
We observe that, in the definition of the payoff, the parameter $a$ controls the
tradeoff between
the two terms in
the exponential scale; whenever possible, the optimum defence strategy is
expected to yield error
exponents that differ exactly by $a$, so as to balance the contributions of
the two terms of \eqref{payoff_soft_bayesian}.

%
%
Notice also that, by defining the payoff as in
\eqref{payoff_soft_bayesian}, we are implicitly
considering for the defender only the strategies
$\Phi(\cdot|\by)$ such that $P_{\mbox{\tiny FP}}(\Phi,A_0) \lexe e^{- a n}$.
%
In fact,
any strategy that does not satisfy
this inequality yields a payoff $u > 1$, that cannot
be optimal, as it can be improved by always deciding
in favor of $\calH_0$ regardless of $\by$ ($u=1$).

As in \cite{BT13}, we focus on the asymptotic behavior of the game
as $n$ tends to infinity.
In particular, we are interested in
the FP and FN exponents defined as:
\begin{equation}
    \varepsilon_{\mbox{\tiny FP}} = - \limsup_{n \rarrow \infty} \frac{\ln P_{\mbox{\tiny FP}} (\Phi, A_0)}{n}; \quad \varepsilon_{\mbox{\tiny FN}} = - \limsup_{n \rarrow \infty} \frac{\ln P_{\mbox{\tiny FN}} (\Phi, A_1)}{n}.
\label{eq.err_exp}
\end{equation}
%
%

We say that a strategy is {\em asymptotically optimum} (or {\em dominant})
if it
is optimum (dominant) with respect to the
asymptotic exponential decay rate (or the exponent, for short)
of the payoff. 

\subsection{Asymptotically Dominant and Universal Attack}
\label{sec.universal_attack}

In this subsection, we characterize an
attack channel that, for both games, is asymptotically dominant and universal, in the sense
of being independent of the unknown underlying sources.
This result paves the way to
the solution of the two games.

Let $u$ denote a generic payoff function of the form
\begin{equation}
\label{Bayesian_payoff}
u = \gamma P_{\mbox{\tiny FN}}(\Phi, A_1) + \beta P_{\mbox{\tiny FP}}(\Phi,A_0),
\end{equation}
where $\beta$ and
$\gamma$ are given positive constants, possibly dependent on $n$.

We notice that the payoff of the Neyman-Pearson and Bayesian games defined in the previous section can be obtained  as particular cases: specifically,  $\gamma = 1$ and $\beta = 0$ for the Neyman-Pearson game and  $\gamma = 1$ and $\beta = e^{a n}$  for the Bayesian one.

%
\begin{theorem}
\label{Theorem_sym_A}
\label{theo_attack_H0_H1}
Let $c_n(\bx)$ denote the reciprocal of the total number of conditional type
classes
$\{\calT(\by|\bx)\}$
that satisfy the constraint $d(\bx,\by) \le n\Delta$
for a given $\Delta > 0$,
namely,
admissible conditional type classes\footnote{From the method
of the types it is known that $1\ge c_n(\bx)\ge (n+1)^{-|\calA|\cdot(|\calA|-1)}$ for any
$\bx$ \cite{CandT}.}.

Define:
\begin{equation}
\label{dominant_attack_channel}
A_{\Delta}^*(\by|\bx) = \left\{\begin{array}{ll}
\frac{c_n(\bx)}{|\calT(\by|\bx)|} & d(\bx,\by)\le n\Delta\\
0 & \mbox{elsewhere}\end{array}\right..
\end{equation}
%
%
Among all pairs of channels $(A_0,A_1) \in \calS_A$,
the pair $(A^*_{\Delta_0}, A_{\Delta_1}^*)$
minimizes the asymptotic exponent of $u$
for every $P_0$, $P_1$, every $\gamma, \beta \ge 0$ and
every permutation--invariant $\Phi(\calH_0|\cdot)$.
\end{theorem}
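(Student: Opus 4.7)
The plan is to exploit the additive and decoupled structure of the payoff. Writing $u = \gamma P_{\mbox{\tiny FN}}(\Phi, A_1) + \beta P_{\mbox{\tiny FP}}(\Phi, A_0)$ as a sum of two nonnegative terms, the first depends only on $A_1$ and the second only on $A_0$, while $\gamma$ and $\beta$ are independent of the attack. Hence the asymptotic exponent of $u$ equals the minimum of the exponents of its two summands, and the attacker minimizes this minimum by minimizing each exponent independently over its own channel. Accordingly, it suffices to prove that for every permutation-invariant $\Phi(\calH_0|\cdot)$ and every $P_1$, $P_{\mbox{\tiny FN}}(\Phi, A_1) \lexe P_{\mbox{\tiny FN}}(\Phi, A^*_{\Delta_1})$ holds for every $A_1 \in \calC_{\Delta_1}$; the analogous bound on the FP term then follows by the symmetric argument with $\Phi(\calH_1|\by)=1-\Phi(\calH_0|\by)$ and $P_0$ in place of $P_1$.

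The core technical step is a pointwise bound in $\bx$ on $f_\Phi(\bx) := \sum_\by A_1(\by|\bx)\Phi(\calH_0|\by)$. Since $\Phi(\calH_0|\by)$ depends on $\by$ only via $\hP_\by$, and every element of a conditional type class $\calT(\by|\bx)$ has the same $\by$-marginal, the map $\by\mapsto \Phi(\calH_0|\by)$ is constant on each such class. Regrouping the sum by conditional types, and using that $A_1 \in \calC_{\Delta_1}$ concentrates its mass on admissible $\by$'s so that the weights $A_1(\calT(\by|\bx)|\bx)$ sum to $1$ over admissible classes, yields the upper bound $f_\Phi(\bx) \le M_\Phi(\bx)$, where $M_\Phi(\bx)$ denotes the maximum of $\Phi(\calH_0|\by)$ over the admissible conditional type classes at $\bx$. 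For the attack $A^*_{\Delta_1}$, the same grouping gives $f_\Phi^*(\bx) = c_n(\bx)\sum_{\calT(\by|\bx)\,\text{adm.}}\Phi(\calH_0|\by) \ge c_n(\bx)\, M_\Phi(\bx)$. Combining the two inequalities with the polynomial bound $c_n(\bx)\ge (n+1)^{-|\calA|\cdot(|\calA|-1)}$ quoted in the statement gives $f_\Phi(\bx) \le (n+1)^{|\calA|\cdot(|\calA|-1)} f_\Phi^*(\bx)$ uniformly in $\bx$, and averaging against $P_1$ delivers the promised exponent-preserving inequality between FN probabilities.

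I expect the main conceptual obstacle to be simply recognising the right reduction: because $\Phi$ is type-invariant, the attacker gains nothing in exponent by concentrating all its mass on a single maximising $\by$ over spreading it uniformly across the admissible conditional type classes, as $A^*$ does. This one observation simultaneously buys dominance (by matching the per-$\bx$ upper bound $M_\Phi(\bx)$ up to a subexponential factor for every $\Phi$) and universality (since $A^*_\Delta$ is defined only from the distortion geometry, with no reference to $\Phi$, $P_0$, or $P_1$). Everything else—additive decoupling of the two terms of $u$, the conditional-type-class accounting, and the polynomial bound on $c_n(\bx)$—is then routine.
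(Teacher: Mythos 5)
Your proof is correct, and it arrives at the same key polynomial-factor bound as the paper, but by a genuinely different route. The paper symmetrizes an arbitrary attack $A_1$ over all permutations: using the permutation invariance of $d$ and of $\Phi$, together with the memorylessness (exchangeability) of $P_1$, it shows $P_{\mbox{\tiny FN}}(\Phi,A_\Pi)=P_{\mbox{\tiny FN}}(\Phi,A_1)$, concludes that the permutation-averaged channel $\bar{A}$ is constant on each conditional type class, and then bounds $\bar{A}(\by|\bx)\le 1/|\calT(\by|\bx)|\le (n+1)^{|\calA|(|\calA|-1)}A^*_{\Delta_1}(\by|\bx)$. You instead fix $\bx$, group $\sum_{\by}A_1(\by|\bx)\Phi(\calH_0|\by)$ by admissible conditional type classes (legitimate because both $\Phi$ and the distortion constraint are constant on each class $\calT(\by|\bx)$: any $\by'\in\calT(\by|\bx)$ is $\Pi\by$ for a permutation $\Pi$ with $\Pi\bx=\bx$, so $d(\bx,\by')=d(\bx,\by)$), bound this by $M_\Phi(\bx)$, and observe that $A^*_{\Delta_1}$ already collects at least $c_n(\bx)M_\Phi(\bx)$, giving $P_{\mbox{\tiny FN}}(\Phi,A_1)\le (n+1)^{|\calA|(|\calA|-1)}P_{\mbox{\tiny FN}}(\Phi,A^*_{\Delta_1})$ pointwise in $\bx$ and hence for \emph{any} source law, memoryless or not --- the one place where the paper's route genuinely uses $P_1(\Pi\bx)=P_1(\bx)$ is thereby avoided, so your argument is slightly more elementary and more general. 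What the paper's symmetrization buys is the explicit intermediate fact that any attack can be replaced by a permutation-invariant one with identical payoff, which feeds the interpretation of $A^*_\Delta$ as ``pick an admissible conditional type class uniformly, then a sequence uniformly within it''; what yours buys is brevity and independence from the source statistics. Your final decoupling step --- each summand of $u$ depends only on its own channel, and $\gamma,\beta$ are attack-independent (even if $n$-dependent), so the termwise polynomial bounds combine exactly as in \eqref{asympt_equality_bayes} --- is sound, and the reduction of the FP term via $\Phi(\calH_1|\by)=1-\Phi(\calH_0|\by)$ (also type-invariant) with $\Delta_0$ in place of $\Delta_1$ is the same as the paper's.
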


\begin{proof}
We first focus on the attack under $\calH_1$ and therefore on the FN probability.

Consider an arbitrary channel $A_1 \in \calC_{\Delta_1}$.
Let $\Pi:\calA^n\to \calA^n$ denote a permutation operator that
permutes any member of $\calA^n$ according to a given permutation matrix
%
and let
\begin{equation}
A_\Pi(\by|\bx)\dfn A_1(\Pi \by|\Pi \bx).
\end{equation}
Since the distortion function is assumed
permutation--invariant, the
channel $A_\Pi(\by|\bx)$ introduces the same
distortion as $A_1$ and hence satisfies the distortion constraint.
Due to the memorylessness of $P_1$ and the
assumption that $\Phi(\calH_0|\by)$ belongs to $\calS_D$, we have:
\begin{eqnarray}
P_{\mbox{\tiny
FN}}(\Phi, A_\Pi)&=&\sum_{\bx,\by}P_1(\bx)A_\Pi(\by|\bx)\Phi(\calH_0|\by)\nonumber\\
&=&\sum_{\bx,\by}P_1(\bx)A_1(\Pi\by|\Pi\bx)\Phi(\calH_0|\by)\nonumber\\
&=&\sum_{\bx,\by}P_1(\Pi\bx)A_1(\Pi\by|\Pi\bx)\Phi(\calH_0|\Pi\by)\nonumber\\
&=&\sum_{\bx,\by}P_1(\bx)A_1(\by|\bx)\Phi(\calH_0|\by)\nonumber\\
&=&P_{\mbox{\tiny FN}}(\Phi, A_1),
\end{eqnarray}
and so, $P_{\mbox{\tiny FN}}(\Phi,A_1)=P_{\mbox{\tiny FN}}(\Phi,\bar{A})$
where we have defined
\begin{equation}
\bar{A}(\by|\bx)=\frac{1}{n!}\sum_{\Pi}A_\Pi(\by|\bx)
=\frac{1}{n!}\sum_{\Pi}A_1(\Pi\by|\Pi\bx),
\end{equation}
which also introduces the same distortion as $A_1$.
Now, notice that this channel
assigns the same conditional probability to all sequences in the
same conditional type class $\calT(\by|\bx)$.
To see why this is true,
we observe that any sequence ${\by}' \in \calT(\by|\bx)$ can be
seen as being obtained from ${\by}$ through
the application of a permutation $\Pi'$ which leaves $\bx$ unaltered. Then, we have:
\begin{align}
\bar{A}(\by'|\bx) & = \bar{A}(\Pi' \by|\Pi' \bx)
= \frac{1}{n!}\sum_{\Pi} A_1(\Pi(\Pi' \by)|\Pi (\Pi' \bx))  \nonumber\\
 & =\frac{1}{n!}\sum_{\Pi}A_1(\Pi\by|\Pi\bx) = \bar{A}(\by|\bx).
\end{align}
Therefore, since $\bar{A}(\calT(\by|\bx)|\bx)\le 1$,
we argue that
\begin{align}
\bar{A}(\by|\bx)\lexe &
\left\{\begin{array}{ll}
\frac{1}{|\calT(\by|\bx)|} & d(\bx,\by)\le n\Delta\\
0 & \mbox{elsewhere}\end{array}\right. \nonumber\\
= & \frac{A_{\Delta_1}^*(\by|\bx)}{c_n(\bx)}\nonumber\\ \le &
(n+1)^{|\calA|\cdot(|\calA|-1)}A_{\Delta_1}^*(\by|\bx),
\end{align}
which implies that,
for every permutation--invariant
defence strategy $\Phi$,
%
\begin{equation}
P_{\mbox{\tiny FN}}(\Phi,A_1)\le
(n+1)^{|\calA|\cdot(|\calA|-1)}P_{\mbox{\tiny
FN}}(A_{\Delta_1}^*,\Phi)
\end{equation}
or equivalently
\begin{equation}
P_{\mbox{\tiny FN}}(\Phi,A_{\Delta_1}^*)\ge (n+1)^{-|\calA|\cdot(|\calA|-1)}P_{\mbox{\tiny
FN}}(A_1,\Phi).
\end{equation}
We conclude that $A_{\Delta_1}^*$
minimizes the error exponent of $P_{\mbox{\tiny FN}}(\Phi,A_1)$
across all channels in
$\calC_{\Delta_1}$ and for every $\Phi \in \calS_{D}$, regardless
of $P_1$.


A similar argument applies to the FP
probability to derive the optimum channel under $\calH_0$;
that is, from the memorylessness of $P_0$ and the
permutation--invariance of $\Phi(\calH_1|\cdot)$, we have:
\begin{equation}
\label{rel_2}
P_{\mbox{\tiny FP}}(\Phi,A_{\Delta_0}^*)\ge (n+1)^{-|\calA|\cdot(|\calA|-1)}P_{\mbox{\tiny
FP}}(A_0,\Phi),
\end{equation}
for every $A_0 \in \calC_{\Delta_0}$.
Accordingly, $A_{\Delta_0}^*$ minimizes the error exponent of $P_{\mbox{\tiny FP}}(\Phi,A_0)$.

We then have:
%
\begin{align}
\label{asympt_equality_bayes}
& \gamma P_{\mbox{\tiny FN}}(\Phi, A_1)   + \beta P_{\mbox{\tiny FP}}(\Phi,A_0)  \nonumber\\
& \hspace{0.3cm} \le (n+1)^{|\calA|\cdot(|\calA|-1)} (\gamma P_{\mbox{\tiny FN}}(\Phi, A_{\Delta_1}^*) + \beta P_{\mbox{\tiny FP}}(\Phi,A_{\Delta_0}^*)) \nonumber\\
& \hspace{0.3cm} \doteq \gamma P_{\mbox{\tiny FN}}(\Phi,A_{\Delta_1}^*) + \beta P_{\mbox{\tiny FP}}(\Phi,A_{\Delta_0}^*), 
\end{align}
for every $A_0 \in \calC_{\Delta_0}$ and $A_1 \in \calC_{\Delta_1}$. Notice
that, since the asymptotic equality is defined in the
logarithmic scale, eq. \eqref{asympt_equality_bayes} holds
no matter what the values of $\beta$ and $\gamma$ are, including values
that depend on $n$.
Hence, the pair of channels
$(A^*_{\Delta_0}, A_{\Delta_1}^*)$ minimizes the
asymptotic exponent of $u$ for any permutation--invariant
decision rule $\Phi(\calH_0|\cdot)$ and for any $\gamma, \beta \ge 0$.
\end{proof}
%


%
According to Theorem \ref{theo_attack_H0_H1}, for every zero-sum game with payoff function of the form in
\eqref{Bayesian_payoff}, if $\Phi$ is permutation-invariant, the pair of attack channels which is the most favorable to the attacker is $(A_{\Delta_0}^* , A_{\Delta_1}^*)$, which does not depend on $\Phi$. Then, the optimum attack strategy $(A_{\Delta_0}^* , A_{\Delta_1}^*)$ is {\em dominant}.
Specifically, given ${\bx}$, in order to generate $\by$ which causes a detection error
with the prescribed maximum allowed distortion, the attacker cannot do any better than
randomly selecting an admissible
conditional type class  according to the uniform distribution and
then choose at random $\by$ within this
conditional type class.
%
%
\begin{figure}[t!]
\centering \includegraphics[width = 0.48\paperwidth]{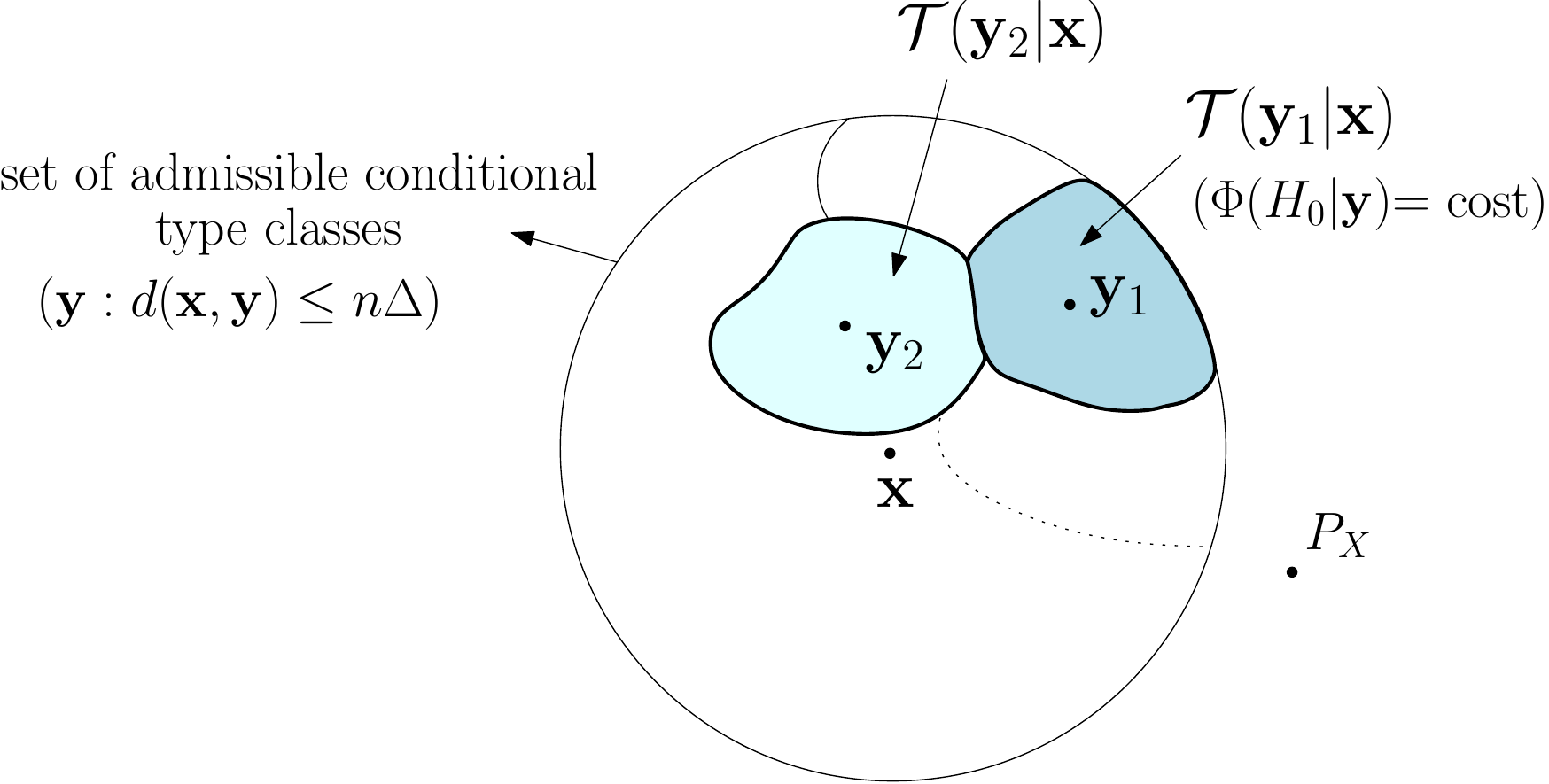} 
\caption{Graphical interpretation of the behavior of the attack channel $A^*_{\Delta}$.}
\label{fig.Opt_Attk}
\end{figure}
Figure \ref{fig.Opt_Attk} illustrates the intuition behind
the definition of the attack channel in \eqref{dominant_attack_channel}: 
since the number of conditional type classes is only polynomial in $n$,
the random choice of the conditional type class does not affect
the exponent of the error probabilities; besides, since the decision is the same
for all sequences within the same conditional type class,
the choice of
$\by$ within that conditional type class is immaterial.


As an additional result,
Theorem \ref{theo_attack_H0_H1} states that,
whenever an adversary aims at maximizing a
payoff function of the form
\eqref{Bayesian_payoff}, and as long as the
defence strategy is confined to the analysis of the first order statistics,
the (asymptotically) optimum attack strategy is
{\em universal} w.r.t. the sources $P_0$ and $P_1$, i.e., it depends neither on $P_0$ nor on $P_1$.

Finally, if $\Delta_0 = \Delta_1 = \Delta$, the optimum attack
consists  of applying the same channel $A^*_{\Delta}$ regardless of the
underlying hypothesis and then the optimum attack strategy
is {\em fully-universal}: the attacker needs to
know neither the sources ($P_0$ and $P_1$), nor the underlying hypothesis.
In this case, it becomes immaterial whether the attacker is
aware or unaware of the true hypothesis.
%
%
As a consequence of this property,
in the hypothesis-unaware case, when the attacker applies the
same channel under both hypotheses, subject to a
fixed maximum distortion $\Delta$, the optimum channel remains $A^*_{\Delta}$.


As a final remark, according to Theorem \ref{theo_attack_H0_H1},
for the partially active case, there exists an (asymptotically)
dominant and universal attack channel.
This result marks a  considerable difference
 relative to the results of \cite{BT13},
where the optimum deterministic attack function is found
using the rationalizability argument, that is,
by exploiting the existence of a dominant defence strategy,
and hence it is neither dominant nor universal.



\section{The Neyman-Pearson Detection Game}
\label{sec.SA_NP}

In this section, we study the detection game with a
fully active attacker in the Neyman-Pearson setup
as defined in Definition \ref{def.NPgame}.
From the analysis of Section \ref{sec.universal_attack},
we already know that there exists a dominant attack strategy.
Regarding the defender, we will determine the asymptotically
optimum strategy
regardless of
the dominant pair of attack channels;
in particular, as will been seen in Lemma \ref{theo_D_opt} below,
an asymptotically dominant defense strategy can
be derived
from a detailed analysis of the FP constraint.
As a consequence, the Neyman-Pearson detection game has a dominant equilibrium.

\subsection{Optimal Detection and Game Equilibrium}
\label{sec.SA_NP_GameEq}

The following lemma characterizes the optimal detection strategy in the
Neyman-Pearson setting.
\begin{lemma}
\label{theo_D_opt}
For the Neyman-Pearson game of Definition 1, the defence strategy
\begin{align}
\label{optimum_defence}
\Phi^*(\calH_1|\by)\dfn & \exp\left\{-n\left[\lambda- \min_{\bx: d(\bx,\by) \le n\Delta_0} \calD(\hat{P}_{\bx}\|P_0)\right]_+\right\},
\end{align}
is asymptotically dominant for the defender.
\end{lemma}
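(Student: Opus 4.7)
My plan is to verify first that $\Phi^*$ is feasible, and then to show that any permutation-invariant $\Phi$ satisfying the FP constraint obeys a per-type upper bound that forces its FN exponent to match that of $\Phi^*$ against every attack, thereby establishing asymptotic dominance.

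The crucial preliminary is to identify the exponent of $Q^*_0(\calT(\hat{P}_\by))$, where $Q^*_0$ is the output PMF induced by the dominant attack $A^*_{\Delta_0}$ from $P_0$. Using $P_0(\bx)\doteq e^{-n[\hat{H}_{\bx}(X)+\calD(\hat{P}_\bx\|P_0)]}$ together with the type-class size estimates $|\calT(\by|\bx)|\doteq e^{n\hat{H}_{\bx\by}(Y|X)}$ and $|\calT(\bx|\by)|\doteq e^{n\hat{H}_{\bx\by}(X|Y)}$, expanding $Q^*_0(\by)=\sum_\bx P_0(\bx)A^*_{\Delta_0}(\by|\bx)$, grouping by joint types, and invoking the entropy identity $\hat{H}_{\bx\by}(X,Y)=\hat{H}_{\bx\by}(Y)+\hat{H}_{\bx\by}(X|Y)$, one obtains
\begin{equation*}
Q^*_0(\calT(\hat{P}_\by))\doteq e^{-n\rho_0(\hat{P}_\by)}, \qquad \rho_0(\hat{P}_\by)\dfn\min_{\hat{P}_{\bx|\by}:\,d(\hat{P}_{\bx\by})\le \Delta_0}\calD(\hat{P}_\bx\|P_0),
\end{equation*}
which is precisely the exponent appearing inside $\Phi^*$. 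Because Theorem~\ref{Theorem_sym_A} guarantees that $A^*_{\Delta_0}$ is the worst-case attack under $\calH_0$, the FP of $\Phi^*$ is
\begin{equation*}
\max_{A_0\in\calC_{\Delta_0}}P_{\mbox{\tiny FP}}(\Phi^*,A_0)\doteq \sum_{\hat{P}_\by}e^{-n\rho_0(\hat{P}_\by)}e^{-n[\lambda-\rho_0(\hat{P}_\by)]_+}\le e^{-n\lambda},
\end{equation*}
since $s+[\lambda-s]_+\ge \lambda$ for every real $s$; hence $\Phi^*\in\calS_D$.

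For dominance, let $\Phi$ be any permutation-invariant strategy meeting the FP constraint. Setting $A_0=A^*_{\Delta_0}$ in the constraint and exploiting the fact that there are only polynomially many types in $\calA^n$, each term of the sum $\sum_{\hat{P}_\by}Q^*_0(\calT(\hat{P}_\by))\Phi(\calH_1|\hat{P}_\by)\le e^{-n\lambda}$ is $\lexe e^{-n\lambda}$, forcing $\Phi(\calH_1|\hat{P}_\by)\lexe e^{-n(\lambda-\rho_0(\hat{P}_\by))}$; combining with the trivial bound $\Phi(\calH_1|\hat{P}_\by)\le 1$ yields the pointwise inequality
\begin{equation*}
\Phi(\calH_1|\hat{P}_\by)\lexe e^{-n[\lambda-\rho_0(\hat{P}_\by)]_+}=\Phi^*(\calH_1|\hat{P}_\by).
\end{equation*}
Fix an arbitrary $A_1\in\calC_{\Delta_1}$ and split the FN sum according to the sign of $\rho_0(\hat{P}_\by)-\lambda$. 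On the set $\rho_0(\hat{P}_\by)\ge\lambda$, $\Phi^*(\calH_0|\cdot)=0$, so $\Phi^*$ contributes $0$ to $P_{\mbox{\tiny FN}}$ while $\Phi$ contributes $\ge 0$. On the complementary set both $1-\Phi^*(\calH_1|\hat{P}_\by)$ and $1-\Phi(\calH_1|\hat{P}_\by)$ tend to $1$ at rates differing only by a subexponential factor, so in the exponential scale the FN contributions of $\Phi^*$ and $\Phi$ agree, both being exponentially equivalent to $\sum_{\hat{P}_\by:\,\rho_0(\hat{P}_\by)<\lambda}Q_1(\calT(\hat{P}_\by))$. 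Combining both sets yields $P_{\mbox{\tiny FN}}(\Phi^*,A_1)\lexe P_{\mbox{\tiny FN}}(\Phi,A_1)$ for every $A_1$, proving asymptotic dominance.

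The main obstacle is the careful treatment of types in the vanishing boundary layer $\lambda-\delta_n<\rho_0(\hat{P}_\by)<\lambda$, where the polynomial slack in the per-type bound does not translate cleanly into a comparison between $1-\Phi^*(\calH_1|\hat{P}_\by)$ and $1-\Phi(\calH_1|\hat{P}_\by)$. The resolution is to choose $\delta_n=O((\log n)/n)\to 0$ and to exploit the continuity of $\rho_0$, as a minimum of the K-L divergence over a compact set of admissible conditional types, so that the exponent of $\sum_{\rho_0\le\lambda-\delta_n}Q_1(\calT(\hat{P}_\by))$ converges to that of $\sum_{\rho_0<\lambda}Q_1(\calT(\hat{P}_\by))$ and the boundary contribution is absorbed into the $\lexe$ notation.
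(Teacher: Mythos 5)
Your proposal is correct and follows essentially the same route as the paper's own proof: it verifies that $\Phi^*$ meets the FP constraint (up to polynomial factors) by reducing the worst case to $A^*_{\Delta_0}$ and counting types, and it establishes dominance by extracting from the FP constraint, evaluated at $A^*_{\Delta_0}$, the per-type bound $\Phi(\calH_1|\by)\lexe \Phi^*(\calH_1|\by)$ and then comparing FN contributions. The boundary-layer subtlety you flag (types with $\rho_0$ within $O((\log n)/n)$ of $\lambda$) is precisely the point the paper glosses over in the step asserting $\Phi^*(\calH_0|\by)\lexe\Phi(\calH_0|\by)$, so your treatment is no less rigorous than the published argument.
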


%
%

%
The proof appears in Appendix \ref{Appx.theo_D_opt}.

We point out that when the attacker is partially--active,
%
it is known from \cite{TBM_wifs15} that the optimum defence strategy is
%
%
\begin{align}
\label{optimum_defence_PA}
\Phi^*(\calH_1|\by)\dfn & \exp\left\{-n\left[\lambda-  \calD(\hat{P}_{\by}\|P_0)\right]_+\right\}.
\end{align}
From \eqref{optimum_defence_PA}, it is easy
to argue that there exists a deterministic strategy, corresponding to the Hoeffding test
\cite{Hoeffding65}, which is asymptotically
equivalent to $\Phi^*(\calH_1|\by)$. This result is in line with the one
 in \cite{BT13} (Lemma 1), where the class of
defence strategies is confined to deterministic
decision rules.

Intuitively, the extension
from \eqref{optimum_defence_PA} to \eqref{optimum_defence} is explained as follows.
In the case of fully active attacker, the defender
is subject to a constraint on
the maximum FP probability over $\calS_A$,
that is, the set of the admissible channels
$A \in \calC_{\Delta_0}$ (see Definition \ref{def.NPgame}).
From the analysis of Section \ref{sec.universal_attack},
%
channel $A_{\Delta_0}^*$
minimizes the FP exponent
over this set.
In order to satisfy the constraint
for a given sequence $\by$, the defender must
handle the worst--case value (i.e., the minimum) of $\calD(\hat{P}_{\bx}\|P_0)$
over all the type
classes $\calT(\bx|\by)$ which satisfy the distortion constraint,
or equivalently,
all the sequences $\bx$ such that $d(\bx,\by) \le n\Delta_0$.

According to Lemma \ref{theo_D_opt},
the best defence strategy
is asymptotically dominant.
Also, since $\Phi^*$ depends on
$P_0$ only, and not on $P_1$, it is referred to as {\it semi--universal}.\\

Concerning the attacker,
since the payoff
is a special case of \eqref{Bayesian_payoff}
with $\gamma = 1$ and $\beta = 0$,
the optimum pair of attack channels
is given by Theorem \ref{theo_attack_H0_H1} and
corresponds to $(A_{\Delta_0}^*,A_{\Delta_1}^*)$.

The following comment is in order.
Since the payoff of the game is defined in terms of the FN probability only,
it is independent of $A_0 \in \calC_{\Delta_0}$.
Furthermore, since the
defender adopts a conservative approach to
 guarantee the FP constraint for every
$A_0$, the constraint is satisfied  for every $A_0$ and  therefore all channel pairs of the
form $(A_0, A_{\Delta_1}^*)$, $A_0 \in \mathcal{S}_A$, are equivalent in terms of the
payoff.
Accordingly, in the
hypothesis--aware case,
the attacker can employ  any admissible channel under $\calH_0$.
In the Neyman--Pearson setting,
the sole fact that
the attacker is  active under $\calH_0$
forces the defender to take countermeasures that make
the choice of $A_0$ immaterial.

Due to the existence of dominant strategies
for both players, we can immediately state the following theorem.
%
\begin{theorem}
\label{theo_equilibirumNP}
Consider the Neyman-Pearson detection game
of Definition \ref{def.NPgame}.
Let $\Phi^*$ and $(A_{\Delta_0}^*,A_{\Delta_1}^*)$
be the strategies defined in Lemma \ref{theo_D_opt}
and Theorem \ref{theo_attack_H0_H1}, respectively.
The profile
$(\Phi^*,(A_{\Delta_0}^*,A_{\Delta_1}^*))$ is
an asymptotically dominant equilibrium of the game.
\end{theorem}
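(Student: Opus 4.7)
The plan is to show the profile is a dominant equilibrium by verifying that each component strategy is dominant for its respective player; by the discussion in Section \ref{sec.intro_GT}, a dominant equilibrium is simply a profile whose two components are dominant strategies, so the theorem is essentially a packaging of the two dominance results already proved in Lemma \ref{theo_D_opt} and Theorem \ref{theo_attack_H0_H1}.

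On the defender's side, nothing needs to be added: Lemma \ref{theo_D_opt} asserts precisely that $\Phi^*$ is asymptotically dominant within $\calS_D$, i.e., it minimizes the asymptotic exponent of $P_{\mbox{\tiny FN}}$ regardless of the attack pair $(A_0, A_1) \in \calS_A$. On the attacker's side, I would specialize Theorem \ref{theo_attack_H0_H1} to $\gamma = 1$ and $\beta = 0$, so that the generic payoff in \eqref{Bayesian_payoff} collapses to the Neyman-Pearson payoff $P_{\mbox{\tiny FN}}(\Phi, A_1)$. The permutation-invariance hypothesis of Theorem \ref{theo_attack_H0_H1} holds automatically for every $\Phi \in \calS_D$ by condition (i) of Definition \ref{def.NPgame}, so Theorem \ref{theo_attack_H0_H1} yields at once that $(A_{\Delta_0}^*, A_{\Delta_1}^*)$ minimizes the asymptotic exponent of the Neyman-Pearson payoff against every admissible defender strategy.

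The only subtlety worth flagging — and not a real obstacle — is the insensitivity of the Neyman-Pearson payoff to $A_0$: Theorem \ref{theo_attack_H0_H1} pins down $A_{\Delta_1}^*$ as the dominant attack under $\calH_1$, while any choice of $A_0 \in \calC_{\Delta_0}$ would produce the same asymptotic FN exponent. This is precisely the point made in the paragraph preceding the theorem statement: the worst-case formulation of the FP constraint in Definition \ref{def.NPgame} forces the defender to neutralize every admissible $A_0$, so picking the particular channel $A_{\Delta_0}^*$ yields a bona fide dominant pair in $\calS_A$. Concatenating these two one-sided dominance statements — no unilateral deviation by the defender from $\Phi^*$ can reduce the asymptotic exponent of the payoff, and no unilateral deviation by the attacker from $(A_{\Delta_0}^*, A_{\Delta_1}^*)$ can increase it — completes the argument.
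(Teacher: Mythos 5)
Your proposal is correct and follows essentially the same route as the paper, which states the theorem as an immediate consequence of the two dominance results: Lemma \ref{theo_D_opt} for the defender and Theorem \ref{theo_attack_H0_H1} specialized to $\gamma=1$, $\beta=0$ for the attacker (permutation-invariance being guaranteed by condition (i) of Definition \ref{def.NPgame}), with the immateriality of $A_0$ handled exactly as you describe. Only your closing paraphrase swaps the directions---dominance means a unilateral defender deviation from $\Phi^*$ cannot \emph{increase} the asymptotic FN exponent, and a unilateral attacker deviation from $(A_{\Delta_0}^*,A_{\Delta_1}^*)$ cannot \emph{decrease} it---but this is a wording slip that does not affect the substance of the argument.
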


\subsection{Payoff at the Equilibrium}

In this section, we derive the payoff of the Neyman-Pearson  game
at the equilibrium of Theorem \ref{theo_equilibirumNP}. To do this, we will assume
an additive distortion function, i.e.,
$d(\bx,\by)=\sum_{i=1}^nd(x_i,y_i)$.
In this case,  $d(\bx,\by)$
can be expressed as $\sum_{ij} n_{\bx\by}(i,j) d(i,j)$,
where $n_{\bx\by}(i,j)=n\hat{P}_{\bx\by}(i,j)$ denotes
the number of occurrences of the pair $(i,j) \in \calA^2$ in $(\bx,\by)$.
Therefore, the distortion constraint
regarding $A_0$ can be rewritten as
$\sum_{(i,j) \in\calA^2} \hat{P}_{\bx\by}(i,j) d(i,j) \le \Delta_0$.
A similar formulation holds for $A_1$.
%

Let us define
\begin{equation}
\label{eq.tilde_D}
{\tilde{\calD}_{\Delta}}^n(\hat{P}_{\by},P)
\dfn \min_{\{\hat{P}_{\bx|\by}: E_{\bx \by} d(X,Y)
\le\Delta\}}
\calD(\hat{P}_{\bx}\|P),
\end{equation}
%
where $E_{\bx \by}$  denotes the {\em empirical expectation}, defined as
%
\begin{equation}
E_{\bx \by} d(X,Y)  =  \sum_{(i,j) \in\calA^2} \hat{P}_{\bx\by}(i,j) d(i,j)
\end{equation}
and the minimization is carried out for a given $\hat{P}_{\by}$.
%
Accordingly, the strategy in \eqref{optimum_defence}
can be  rewritten as
\begin{equation}
\Phi^*(\calH_1|\by)\dfn  \exp\left\{-n\left[\lambda- \tilde{\calD}_{\Delta_0}^n(\hat{P}_{\by}\|P_0)\right]_+\right\}.
\end{equation}
When $n \rightarrow \infty$,
$\tilde{\calD}_{\Delta}^n$ becomes\footnote{Due to the the density of rational numbers on
the real line, the admissibility set
in \eqref{eq.tilde_D} is dense in  that of \eqref{eq.tilde_D_limit};
since the the divergence functional is  continuous, the sequence
$\{\tilde{\calD}_{\Delta}^n(\hat{P}_{\by},P)\}_{n\ge 1}$
tends to $\tilde{\calD}_{\Delta}(P_{Y},P)$ as $n \rightarrow \infty$.}
\begin{equation}
\label{eq.tilde_D_limit}
\tilde{\calD}_{\Delta}(P_{Y},P) \dfn \min_{\{P_{X|Y}: E_{XY} d(X,Y)\le\Delta\}}
\calD(P_{X}\|P),
\end{equation}
%
%
%
where $E_{XY}$ denotes
expectation
w.r.t.\ $P_{XY}$.

%

Definition \eqref{eq.tilde_D_limit} can be stated for any PMF
$P_Y$ in the probability simplex in $\mathds{R}^{|\calA|}$.
%
Note that the minimization problem in
\eqref{eq.tilde_D_limit} has a unique solution
as it is a convex program.

The function $\tilde{\calD}_{\Delta}$
will have an important role in the remaining part of the paper,
especially in
the characterization of the asymptotic behavior of the games.
To draw a parallelism, $\tilde{\calD}_{\Delta}$ plays a role similar to that of the Kullback--Leibler divergence $\calD$
in classical detection theory for the non-adversarial case.

The basic properties of the
functional $\tilde{\calD}_{\Delta}(P_{Y},P)$ are the following: (i) it
is continuous in $P_{Y}$;  (ii) it has convex level sets, i.e.,
the set $\{P_Y : \tilde{\calD}_{\Delta}(P_{Y},P) \le t\}$ is
convex for every $t \ge 0$.
Point (ii) is a consequence of the following property,
which will turn out to be useful for proving
some of the results in the sequel (in particular, Theorem \ref{theo_e.e_2}, \ref{theo_e.e_B} and also \ref{theorem_EMD_FA}).
\begin{property}
\label{property_convex}
The function $\tilde{\calD}_{\Delta}(P_{Y},P)$
is convex in $P_Y$ for every fixed $P$.
\end{property}
The proof follows  from
the convexity of the divergence functional (see Appendix \ref{Appx.property}).



%

Using the above definitions, the
equilibrium payoff
is given by the following theorem:
\begin{theorem}
\label{theo_e.e_2}
Let the Neyman-Pearson detection
game be as in Definition \ref{def.NPgame}.
Let $(\Phi^*,(A_{\Delta_0}^*,A_{\Delta_1}^*))$
be the equilibrium profile of Theorem \ref{theo_equilibirumNP}.
Then,\footnote{We make explicit the dependence on the parameter $\lambda$ in the notation of the error exponent, since this will turn to be useful in the sequel.}
\begin{align}
\label{error_expon_2}
\varepsilon_{\mbox{\tiny
FN}}(\lambda) = &- \lim_{n \rightarrow \infty} \frac{1}{n} \ln P_{\mbox{\tiny
FN}}(\Phi^*, A_{\Delta_1}^*)  \nonumber\\ = & \underset{P_Y: \tilde{\calD}_{\Delta_0}(P_Y, P_0) \le \lambda}{\min} \tilde{\mathcal{D}}_{\Delta_1}(P_Y, P_1).
\end{align}
\end{theorem}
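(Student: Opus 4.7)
The plan is to evaluate the equilibrium payoff by applying the method of types to the two ingredients separately: the attacked output distribution $Q_1(\cdot)$ induced by $P_1$ and the dominant channel $A_{\Delta_1}^*$, and the (type-dependent) decision $\Phi^*(\calH_0\mid\by)=1-\exp\{-n[\lambda-\tilde{\calD}^n_{\Delta_0}(\hat{P}_{\by},P_0)]_+\}$. Since both quantities depend on $\by$ only through $\hat{P}_{\by}$, the sum $P_{\mbox{\tiny FN}}(\Phi^*,A_{\Delta_1}^*)=\sum_{\by}Q_1(\by)\Phi^*(\calH_0\mid\by)$ can be regrouped as a sum over types, $\sum_{P_Y}Q_1(\calT(P_Y))\,\Phi^*(\calH_0\mid\by_{P_Y})$, and I would take the exponent of each factor.

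For the first factor, I would write, for $(\bx,\by)\in\calT(P_{XY})$ with $E_{XY}d(X,Y)\le\Delta_1$, the standard type estimates $P_1(\bx)\exe e^{-n[H(X)+\calD(P_X\|P_1)]}$, $|\calT(\by\mid\bx)|\exe e^{nH(Y|X)}$ and $|\calT(P_{XY})|\exe e^{nH(X,Y)}$, while $c_n(\bx)$ is polynomial. Plugging into $Q_1(\calT(P_Y))=\sum_{P_{X\mid Y}:Ed\le\Delta_1}|\calT(P_{XY})|\cdot P_1(\bx)\cdot c_n(\bx)/|\calT(\by\mid\bx)|$, the entropy terms cancel via $H(X,Y)=H(X)+H(Y\mid X)$, and a Laplace-type largest-term argument (number of joint types polynomial in $n$) yields $Q_1(\calT(P_Y))\exe\exp\{-n\tilde{\calD}_{\Delta_1}(P_Y,P_1)\}$ with the minimization over admissible $P_{X\mid Y}$ exactly matching definition \eqref{eq.tilde_D_limit}.

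For the second factor, I would observe that on the type classes with $\tilde{\calD}_{\Delta_0}(P_Y,P_0)<\lambda$ the defender plays $\Phi^*(\calH_0\mid\by)=1-e^{-n(\lambda-\tilde{\calD}_{\Delta_0})}\exe 1$, while on those with $\tilde{\calD}_{\Delta_0}(P_Y,P_0)\ge\lambda$ it plays $\Phi^*(\calH_0\mid\by)=0$. Substituting into the type-indexed sum gives
\begin{equation*}
P_{\mbox{\tiny FN}}(\Phi^*,A_{\Delta_1}^*)\exe\sum_{P_Y:\tilde{\calD}_{\Delta_0}(P_Y,P_0)<\lambda}\exp\{-n\tilde{\calD}_{\Delta_1}(P_Y,P_1)\},
\end{equation*}
and since the number of types is polynomial in $n$, the exponent equals the minimum of $\tilde{\calD}_{\Delta_1}(P_Y,P_1)$ over types with $\tilde{\calD}_{\Delta_0}(P_Y,P_0)<\lambda$.

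The remaining work is to pass from the discrete minimization over types to the continuous one in the theorem statement, replacing $<\lambda$ by $\le\lambda$. Here I would invoke Property 1 (convexity of $\tilde{\calD}_\Delta(\cdot,P)$ in its first argument), its continuity in $P_Y$, and the density of rational PMFs in the simplex, so that the type-class infimum converges to $\min_{P_Y:\tilde{\calD}_{\Delta_0}(P_Y,P_0)\le\lambda}\tilde{\calD}_{\Delta_1}(P_Y,P_1)$, and the $<\lambda$/$\le\lambda$ discrepancy is absorbed because $\tilde{\calD}_{\Delta_1}(P_Y,P_1)$ is continuous on the closed convex feasibility set. The main obstacle I anticipate is the careful method-of-types bookkeeping for $Q_1(\calT(P_Y))$ under the randomized channel $A_{\Delta_1}^*$: one has to verify that the cancellation between $|\calT(\by\mid\bx)|$ in the denominator of $A_{\Delta_1}^*$ and the size of the joint type class in the numerator is tight (not merely an upper bound), which is what ultimately produces $\tilde{\calD}_{\Delta_1}$ rather than a looser expression.
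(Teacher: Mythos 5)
Your proposal is correct and follows essentially the same method-of-types route as the paper's proof: regrouping the FN probability by (joint) types, using the exponential tightness of $c_n(\bx)$ and the type-class cardinalities so the entropy terms cancel and yield $\tilde{\calD}^n_{\Delta_1}$, invoking the polynomial number of types for the largest-term step, and then passing from empirical PMFs with constraint $<\lambda$ to the continuous minimum over the closed set $\{\tilde{\calD}_{\Delta_0}(P_Y,P_0)\le\lambda\}$ via density and the continuity/convexity of Property \ref{property_convex}. The one point the paper treats more carefully, and which you should make explicit, is the lower bound near the boundary: your claim that $\Phi^*(\calH_0|\by)\exe 1$ uniformly on all types with $\tilde{\calD}^n_{\Delta_0}(\hat{P}_{\by},P_0)<\lambda$ can fail for types exponentially close to $\lambda$, so the paper instead selects a sequence of types satisfying $\tilde{\calD}^n_{\Delta_0}(\hat{P}_{\by},P_0)\le\lambda-(\ln n)/n$ converging to the continuous minimizer, which guarantees the factor $1-e^{-n(\lambda-\tilde{\calD}^n_{\Delta_0})}$ tends to $1$ while continuity absorbs the vanishing margin.
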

%
%
The proof, which appears in Appendix \ref{Appx.theo_e.e_2}, is based on
Sanov's theorem \cite{Sanov,CSnow}, by exploiting the compactness of the set
$\{P_Y: \tilde{D}_{\Delta_0}(P_Y, P_0) \le \lambda \}$.

From Theorem \ref{theo_e.e_2}
it follows that $\varepsilon_{\mbox{\tiny
FN}}(\lambda) = 0$  whenever
there exists a PMF $P_Y$ inside the set $\{P_Y: \tilde{D}_{\Delta_0}(P_Y, P_0) \le \lambda \}$ with $\Delta_1$-limited expected distortion from $P_1$.
When this condition does not hold, $P_{\mbox{\tiny
FN}}(\Phi^*, A_{\Delta_1}^*) \rightarrow 0$ exponentially rapidly.

%

For a partially--active attacker,
the  error exponent in \eqref{error_expon_2} becomes
%
%
\begin{align}
\label{error_expon_1}
\varepsilon_{\mbox{\tiny
FN}}(\lambda) = & \underset{P_Y: {\calD}(P_Y, P_0) \le \lambda}{\min} \tilde{\mathcal{D}}_{\Delta_1}(P_Y, P_1).
\end{align}
%
It can be shown that the error exponent in \eqref{error_expon_1} is the same as
 the error exponent of Theorem 2 in \cite{BT13}
(and Theorem 2 in \cite{BT_SM}),
where deterministic strategies are considered for both the defender and
the attacker.
Such equivalence can be explained as follows. As already pointed, the optimum
defence strategy  in \eqref{optimum_defence_PA} and the deterministic rule
found in \cite{BT13} are asymptotically equivalent  (see the discussion
immediately after Lemma \ref{theo_D_opt}). Concerning the attacker, even in
the more general setup (with randomized strategies) considered here, an asymptotically optimum attack could be derived as in  \cite{BT13}, that is, by considering the best response to the dominant defence strategy in  \cite{BT13}.
Such attack consists  of minimizing the divergence w.r.t. $P_0$, namely  $\calD(\hat{P}_{\bf y}||P_0)$, over all the admissible sequences ${\bf y}$, and then is deterministic. 
%
Therefore, concerning the partially active case, the asymptotic behavior of the game is equivalent to the one in \cite{BT13}. The main difference
between the setup in \cite{BT13} and the more general
one addressed in this paper relies on the {\em kind} of game equilibrium,
which is stronger here
(namely, a {\em dominant} equilibrium) due to the existence of dominant
strategies for both the defender and the attacker,
rather than for the defender only.

When the distortion function $d$ is a metric, we can state the following result, whose
proof appears in Appendix \ref{Appx.theo_e_fn_d}.
\begin{theorem}
\label{theo_e_fn_d}
When the distortion function $d$ is a metric,
eq.\ \eqref{error_expon_2} can be rephrased
as
\begin{align}
\label{error_expon_2_var}
\varepsilon_{\mbox{\tiny
FN}}(\lambda) = &\underset{P_Y: \calD(P_Y \| P_0) \le
\lambda}{\min}\tilde{\mathcal{D}}_{\Delta_0 + \Delta_1}(P_Y, P_1).
\end{align}
%
\end{theorem}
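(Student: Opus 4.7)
My plan is to express both sides of the claimed identity as three-variable convex programs over joint PMFs on $\calA$, and then establish the equality by two opposite inequalities, each of which exploits a different axiom of the metric $d$. Denote by $A$ the minimum on the right-hand side of \eqref{error_expon_2} and by $B$ the minimum on the right-hand side of \eqref{error_expon_2_var}. Unfolding the inner minimization in the definition \eqref{eq.tilde_D_limit} of $\tilde{\calD}_\Delta$ turns these into
\begin{align*}
A &= \min_{\substack{(P_Y,P_{Z|Y},P_{X|Y}):\\ E[d(Z,Y)]\le\Delta_0,\,E[d(X,Y)]\le\Delta_1,\\ \calD(P_Z\|P_0)\le\lambda}} \calD(P_X\|P_1),\\
B &= \min_{\substack{(P_W,P_{X|W}):\\ \calD(P_W\|P_0)\le\lambda,\\ E[d(X,W)]\le\Delta_0+\Delta_1}} \calD(P_X\|P_1),
\end{align*}
where the expectations are taken under the joint PMFs induced by the chosen conditional distributions.

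The first inequality I would prove is $B\le A$. Starting from any feasible triple for the $A$-program, I would form the joint $P_{ZYX}=P_Y\cdot P_{Z|Y}\cdot P_{X|Y}$ (taking $Z$ and $X$ conditionally independent given $Y$; this is admissible because only the pairwise marginals $(Z,Y)$ and $(Y,X)$ enter the constraints). The triangle inequality for the metric $d$ gives $d(Z,X)\le d(Z,Y)+d(Y,X)$ pointwise, so taking expectations and marginalizing out $Y$ yields a joint $P_{ZX}$ with $E[d(Z,X)]\le\Delta_0+\Delta_1$. Renaming $Z$ as $W$ produces a pair feasible for the $B$-program that attains the same value of $\calD(P_X\|P_1)$.

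The opposite inequality $A\le B$ is the less obvious part and relies on a ``metric splitting'' construction. Given a feasible pair $(P_W,P_{X|W})$ for $B$ with joint $P_{WX}$, I would introduce an auxiliary variable $Y$ conditionally on $(W,X)$ by
\begin{equation*}
P(Y=y\mid W=w,X=x) = p\,\mathds{1}\{y=w\}+(1-p)\,\mathds{1}\{y=x\},\qquad p\dfn\frac{\Delta_1}{\Delta_0+\Delta_1}.
\end{equation*}
Because $d(a,a)=0$, a direct computation gives $E[d(W,Y)]=(1-p)E[d(W,X)]\le\Delta_0$ and $E[d(X,Y)]=p\,E[d(W,X)]\le\Delta_1$. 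Setting $P_{Z|Y}\dfn P_{W|Y}$, so that the marginal $P_Z=P_W$ and the divergence constraint $\calD(P_Z\|P_0)\le\lambda$ is inherited from the one on $P_W$, gives a triple feasible for the $A$-program with the same objective $\calD(P_X\|P_1)$.

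Combining the two inequalities yields $A=B$, proving the theorem. The step I expect to require the most care is the $A\le B$ direction, since there one must identify the appropriate randomized intermediate variable $Y$ and verify that the induced marginal $P_Y$ together with the associated conditionals satisfy all constraints of the $A$-program; the $B\le A$ direction reduces transparently to the pointwise triangle inequality followed by marginalization, which is precisely what allows the two separate distortion budgets to merge into the single budget $\Delta_0+\Delta_1$ appearing in \eqref{error_expon_2_var}.
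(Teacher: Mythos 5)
Your proof is correct and follows essentially the same route as the paper's: one direction via the pointwise triangle inequality applied to the Markov coupling $Z$--$Y$--$X$, and the converse by constructing the intermediate variable with marginal $P_Y=\alpha P_X+(1-\alpha)P_W$, $\alpha=\Delta_0/(\Delta_0+\Delta_1)$, which is exactly the "leave-in-place" splitting the paper writes out through the pairwise joints in \eqref{clever_joint}--\eqref{clever_joint2}; your randomized kernel $P(Y\mid W,X)$ induces those same joints. No gaps.
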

Comparing  eq. \eqref{error_expon_2_var} with
\eqref{error_expon_1}
is insightful for understanding the difference
between the
fully active and partially active cases.
%
Specifically, the FN error exponents  of both cases are the same when
the distortion under $\calH_1$ in the partially-active case is $\Delta_0 + \Delta_1$
(instead of $\Delta_1$).

When $d$ is not a metric, \eqref{error_expon_2_var}
is only an upper bound on  $\varepsilon_{\mbox{\tiny
FN}}(\lambda)$, as can be seen from the proof of Theorem \ref{theo_e_fn_d}.
Accordingly,
in the general case ($d$ is not a metric), applying distortion levels $\Delta_0$ and $\Delta_1$ to sequences from, respectively, $\calH_0$ and $\calH_1$ (in the fully active setup) is more favorable to the attacker
with respect to applying a distortion $\Delta_0 + \Delta_1$ to sequences from $\calH_0$ only (in the partially active setup).

\section{The Bayesian detection game}
\label{sec.SA_B}


In this section, we study the
Bayesian game (Definition \ref{def.Bgame}).
In contrast to the
Neyman--Pearson
game, in the Bayesian game, the optimal defence strategy
is found by assuming that
the strategy played by the attacker, namely the optimum
pair of channels $(A_0^*,A_1^*)$ of Theorem \ref{theo_attack_H0_H1},
is known to the defender, that is, by
exploiting the rationalizability argument
(see Section \ref{sec.intro_GT}).
Accordingly, the resulting optimum strategy is not dominant,
and so, the  associated equilibrium
is weaker  compared to that of the Neyman--Pearson game.

\subsection{Optimum Defence and Game Equilibrium}

Since the payoff in \eqref{payoff_soft_bayesian} is
a special case of  \eqref{Bayesian_payoff} with $\gamma = 1$ and $\beta = e^{a n}$,
for any defence strategy $\Phi \in \calS_{D}$,
the asymptotically optimum attack channels under $\calH_0$ and $\calH_1$ are
given by Theorem \ref{theo_attack_H0_H1},
and correspond to the pair  $(A_{\Delta_0}^*, A_{\Delta_1}^*)$.
Then,
we can determine the best defence strategy by assuming
that the attacker will play $(A_{\Delta_0}^*,A_{\Delta_1}^*)$
and evaluating the best response
of the defender to this pair of channels.



Our solution for the Bayesian detection game is
given in the following theorem, whose proof appears in
Appendix \ref{Appx.theo.Bayesian}.

\begin{theorem}
\label{theo.Bayesian}
Consider the Bayesian detection game  of Definition \ref{def.Bgame}.
Let $Q_0^*(\by)$ and
$Q_1^*(\by)$ be the probability distributions 
induced by channels $A_{\Delta_0}^*$ and $A_{\Delta_1}^*$,
respectively.

Then,\footnote{We remind that $U(\cdot)$ denotes the Heaviside step function.}
\begin{equation}
\label{optimum_defence_SA-2_1}
\Phi^{\#}(\calH_1|\by) =
U\left(\frac{1}{n} \log
\frac{Q_1^*(\by)}{Q_0^*(\by)} - a\right)
\end{equation}
%
is the optimum defence strategy.

If, in addition,
the distortion measure is additive, the defence strategy
%
\begin{equation}
\label{optimum_defence_SA-2_2}
\Phi^{\dag}(\calH_1|\by) = U\left(\tilde{\calD}_{\Delta_0}^n(\hat{P}_{\by},P_0)-
\tilde{\calD}_{\Delta_1}^n(\hat{P}_{\by},P_1) - a\right)
\end{equation}
is asymptotically optimum.
\end{theorem}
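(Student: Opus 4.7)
The plan is to exploit the dominance result of Theorem~\ref{theo_attack_H0_H1}: since the payoff \eqref{payoff_soft_bayesian} is of the form \eqref{Bayesian_payoff} with $\gamma=1$ and $\beta=e^{an}$, the pair $(A_{\Delta_0}^*,A_{\Delta_1}^*)$ asymptotically minimizes the attacker's objective regardless of $\Phi$. Invoking the rationalizability argument, the defender can therefore be assumed to best-respond to this pair and I only need to solve a one-sided optimization problem over $\Phi \in \calS_D$.

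First, I would rewrite the payoff, when the attacker plays $(A_{\Delta_0}^*,A_{\Delta_1}^*)$, as
\begin{equation}
u(\Phi) = \sum_{\by} Q_1^*(\by)\bigl(1-\Phi(\calH_1|\by)\bigr) + e^{an}\sum_{\by}Q_0^*(\by)\Phi(\calH_1|\by)
= 1 + \sum_{\by}\Phi(\calH_1|\by)\bigl[e^{an}Q_0^*(\by)-Q_1^*(\by)\bigr].
\end{equation}
Pointwise minimization in $\Phi(\calH_1|\by)\in[0,1]$ immediately yields $\Phi(\calH_1|\by)=1$ iff $Q_1^*(\by)>e^{an}Q_0^*(\by)$, which is exactly $\Phi^\#$ in \eqref{optimum_defence_SA-2_1}. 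I would then verify admissibility, namely that $\Phi^\#$ depends on $\by$ only via $\hat{P}_{\by}$: the memorylessness of $P_0,P_1$ together with the identity $A_{\Delta_i}^*(\Pi\by|\Pi\bx)=A_{\Delta_i}^*(\by|\bx)$ (which follows from permutation-invariance of both the distortion and the conditional type class size in \eqref{dominant_attack_channel}) imply $Q_i^*(\Pi\by)=Q_i^*(\by)$ for every permutation $\Pi$, hence $Q_i^*$ and thus $\Phi^\#$ are functions of $\hat{P}_{\by}$ only.

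The second part reduces to showing that, under additive distortion, the log-likelihood ratio in \eqref{optimum_defence_SA-2_1} is asymptotically equivalent to the divergence difference in \eqref{optimum_defence_SA-2_2}. The key computation is a standard method-of-types estimate: grouping the sum $Q_i^*(\by)=\sum_{\bx} P_i(\bx)A_{\Delta_i}^*(\by|\bx)$ by the joint type $\hat{P}_{\bx\by}$, using $P_i(\bx)\exe\exp\{-n[H(\hat{P}_{\bx})+\calD(\hat{P}_{\bx}\|P_i)]\}$, $|\calT(\by|\bx)|\exe\exp\{nH(Y|X)\}$, $|\calT(\bx|\by)|\exe\exp\{nH(X|Y)\}$, and $c_n(\bx)\exe 1$, the contribution of a joint type with marginal $\hat{P}_{\by}$ and empirical distortion $\le\Delta_i$ collapses (via $H(X|Y)-H(Y|X)=H(X)-H(Y)$) to $\exp\{-n[H(\hat{P}_{\by})+\calD(\hat{P}_{\bx}\|P_i)]\}$. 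Maximizing over admissible $\hat{P}_{X|Y}$ absorbs the polynomial number of joint types and gives
\begin{equation}
Q_i^*(\by)\exe\exp\left\{-n\bigl[H(\hat{P}_{\by})+\tilde{\calD}_{\Delta_i}^n(\hat{P}_{\by},P_i)\bigr]\right\}.
\end{equation}
Substituting into \eqref{optimum_defence_SA-2_1}, the entropy terms cancel and the threshold condition becomes $\tilde{\calD}_{\Delta_0}^n(\hat{P}_{\by},P_0)-\tilde{\calD}_{\Delta_1}^n(\hat{P}_{\by},P_1)>a+o(1)$, which recovers $\Phi^\dag$.

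The main obstacle is the last step, showing that the difference between $\Phi^\#$ and $\Phi^\dag$ on the boundary types (where the log-likelihood ratio is within $o(1)$ of $a$) does not affect the payoff on the exponential scale. I would handle this by noting that the number of types is polynomial in $n$ and that each type contributes to the payoff proportionally to $Q_0^*$ or $Q_1^*$ weighted by $\Phi(\calH_1|\cdot)$; any discrepancy on the boundary therefore costs only a subexponential factor, so $\Phi^\dag$ achieves the same asymptotic payoff exponent as $\Phi^\#$ and is hence asymptotically optimum.
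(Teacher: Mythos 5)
Your proposal is correct and follows essentially the same route as the paper: the first part is the Bayes-risk best response (LRT) to the dominant pair $(A_{\Delta_0}^*,A_{\Delta_1}^*)$ obtained via rationalizability --- the paper simply cites LRT optimality where you spell out the pointwise minimization and the permutation-invariance of $Q_0^*,Q_1^*$ --- and the second part is the same method-of-types estimate $Q_i^*(\by)\exe\exp\{-n[\hat{H}_{\by}(Y)+\tilde{\calD}_{\Delta_i}^n(\hat{P}_{\by},P_i)]\}$ leading to $\Phi^{\dag}$. Your final paragraph on boundary types merely makes explicit a point the paper leaves implicit, so no substantive difference in approach.
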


It is  useful to provide
the
asymptotically optimum strategy, $\Phi^{\dag}$, in addition to the optimal one,
$\Phi^{\#}$,  for the following reason: while
$\Phi^{\#}$  requires the
non-trivial computation of the two probabilities $Q_1(\by)$ and $Q_0(\by)$,
the strategy $\Phi^{\dag}$, which leads to the same payoff asymptotically, is easier to implement
because of its single-letter form.

We  now state the following theorem.
\begin{theorem}
\label{theo_equilubrium_B}
Consider the Bayesian  game  of
Definition \ref{def.Bgame}.
Let $(A_{\Delta_0}^*,A_{\Delta_1}^*)$ be
the attack strategy of
Theorem \ref{theo_attack_H0_H1} and let $\Phi^{\#}$ and
$\Phi^{\dag}$ be the defence  strategies
defined, respectively, in \eqref{optimum_defence_SA-2_1} and \eqref{optimum_defence_SA-2_2}.
The profiles $(\Phi^{\#},(A_{\Delta_0}^*,A_{\Delta_1}^*))$
and $(\Phi^{\dag},(A_{\Delta_0}^*,A_{\Delta_1}^*))$ are asymptotic rationalizable
equilibria of the game.
\end{theorem}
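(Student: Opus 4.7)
The plan is to invoke the iterated elimination of (asymptotically) dominated strategies procedure described in Section \ref{sec.intro_GT}, leveraging the two ingredients already established: the attack--side dominance provided by Theorem \ref{theo_attack_H0_H1} and the defender's best--response characterization given by Theorem \ref{theo.Bayesian}.

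First, I would observe that the Bayesian payoff \eqref{payoff_soft_bayesian} is a special case of the generic payoff \eqref{Bayesian_payoff} with $\gamma=1$ and $\beta=e^{an}$, and that every $\Phi\in\calS_D$ prescribed by Definition \ref{def.Bgame} depends on $\by$ only via its type and is hence permutation--invariant. Theorem \ref{theo_attack_H0_H1} therefore applies and asserts that, against every $\Phi\in\calS_D$, the channel pair $(A_{\Delta_0}^*,A_{\Delta_1}^*)$ minimizes the asymptotic exponent of the payoff $u$ among all pairs in $\calS_A$; equivalently, it asymptotically dominates every competing attack pair. The first round of elimination thus removes the entire set $\calS_A\setminus\{(A_{\Delta_0}^*,A_{\Delta_1}^*)\}$, leaving a reduced game in which the attacker is committed to $(A_{\Delta_0}^*,A_{\Delta_1}^*)$.

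Next, I would pass to the defender's best response in this reduced game, which becomes a single--player optimization against the fixed channel pair. Theorem \ref{theo.Bayesian} already solves it: $\Phi^{\#}$ is the exact optimum defence against $(A_{\Delta_0}^*,A_{\Delta_1}^*)$, while $\Phi^{\dag}$ (under additive distortion) attains the same asymptotic payoff exponent and is therefore asymptotically optimum. Accordingly, in the second round of elimination every $\Phi\in\calS_D$ whose payoff exponent against $(A_{\Delta_0}^*,A_{\Delta_1}^*)$ is strictly worse than that of $\Phi^{\#}$ is discarded, so both $\Phi^{\#}$ and $\Phi^{\dag}$ survive, which is the claim.

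The main subtlety to handle is that both dominance results at play are asymptotic rather than strict: they concern the exponential decay rate of the payoff, so strategies that share the same exponent cannot be discriminated against each other through elimination. Consequently the set of survivors is an asymptotic equivalence class rather than a single profile, and the statement of the theorem has to be read in this sense; $\Phi^{\#}$ and $\Phi^{\dag}$ are simply two notable representatives of this class, the former being exactly optimal and the latter preferred in practice because of its single--letter form. For the same reason the equilibrium is qualified as rationalizable rather than dominant, contrary to what was obtained in Theorem \ref{theo_equilibirumNP} for the Neyman--Pearson game, where the defence strategy \eqref{optimum_defence} was dominant in its own right and did not depend on identifying the attacker's play by rationalizability.
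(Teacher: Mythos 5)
Your proposal is correct and follows essentially the same route as the paper: the paper gives no separate formal proof but derives the result, exactly as you do, from the rationalizability argument -- the attacker's dominance established in Theorem \ref{theo_attack_H0_H1} (first elimination round) combined with the defender's (asymptotic) best response to $(A_{\Delta_0}^*,A_{\Delta_1}^*)$ from Theorem \ref{theo.Bayesian}. Your remark on the asymptotic nature of the dominance and the resulting equivalence class of surviving defence strategies is consistent with the paper's reading of the equilibrium as an \emph{asymptotic} rationalizable one.
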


The analysis in this section can be easily
generalized to any payoff function defined as in \eqref{Bayesian_payoff},
i.e., for any $\gamma, \beta \ge 0$.

Finally, we observe
that, the fact that the equilibrium found in the Bayesian case (namely, a
rationalizable equilibrium) is weaker with respect to the equilibrium derived for the
Neyman--Pearson game (namely, a dominant equilibrium) is a consequence of
the fact that the Bayesian game is defined in a less restrictive manner than the
Neyman--Pearson game. This is due to the conservative approach adopted in the latter:
while in the Bayesian game the defender cares about both FP and FN probabilities and their tradeoff, in the Neymam--Pearson game the defender does not care about the value of the FP probability provided that its exponent is larger than $\lambda$, which is automatically guaranteed by restricting the set of strategies. This restriction simplifies the game so that a dominant strategy can be found for the restricted game.

\subsection{Equilibrium Payoff}
\label{sec.payoff_B}

We now derive the equilibrium payoff of the Bayesian game.
%
As in the
Neyman--Pearson game, we  assume an additive distortion measure.
For simplicity, we focus on
the asymptotically optimum defence strategy $\Phi^{\dag}$.
We  have the following theorem.
\begin{theorem}
\label{theo_e.e_B}
Let the Bayesian detection
game be  as in Definition \ref{def.Bgame}.
Let $(\Phi^{\dag},(A_{\Delta_0}^*,A_{\Delta_1}^*))$ be the
equilibrium profile  of
Theorem \ref{theo_equilubrium_B}. The asymptotic exponential
rate of the equilibrium
Bayes payoff $u$ is given by
\begin{align}
\label{expon_B}
 - \lim_{n \rightarrow \infty} & \frac{1}{n}  \ln \left(u(\Phi^{\dag}, (A_{\Delta_0}^*, A_{\Delta_1}^*)) \right)=  \nonumber\\
&  \min_{P_Y} \left(\max\left\{
\tilde{\calD}_{\Delta_1}(P_Y,P_1), (\tilde{\calD}_{\Delta_0}(P_Y,P_0) - a)\right\}\right).
\end{align}
\end{theorem}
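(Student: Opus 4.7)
The plan is to exploit the fact that the exponential rate of a sum of two positive terms equals the minimum of the two individual rates. Writing $u = P_{\mbox{\tiny FN}}(\Phi^{\dag},A_{\Delta_1}^*)+e^{an}P_{\mbox{\tiny FP}}(\Phi^{\dag},A_{\Delta_0}^*)$, I would first show that
\begin{equation}
-\lim_{n\to\infty}\frac{1}{n}\ln u = \min\left\{\varepsilon_{\mbox{\tiny FN}}^{\dag}, \varepsilon_{\mbox{\tiny FP}}^{\dag}-a\right\},
\end{equation}
where $\varepsilon_{\mbox{\tiny FN}}^{\dag}$ and $\varepsilon_{\mbox{\tiny FP}}^{\dag}$ are the exponents of the FN and FP probabilities induced by the equilibrium profile. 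So it remains to compute each of these two exponents separately and then to rewrite their minimum as the $\min\max$ appearing in \eqref{expon_B}.

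Next, I would derive the single-letter form of the output distributions induced by the universal attack channels. For any $\by$ of type $\hat P_\by$, the method of types gives
\begin{equation}
Q_i^*(\by)\exe \max_{\{P_{X|Y}:\, E_{XY} d(X,Y)\le \Delta_i\}} \exp\{n[H(X|Y)-H(Y|X)-H(X)-\calD(P_X\|P_i)]\},
\end{equation}
since $P_i(\bx)\exe e^{-n[H(\hat P_\bx)+\calD(\hat P_\bx\|P_i)]}$, $|\calT(\bx|\by)|\exe e^{nH(X|Y)}$, $|\calT(\by|\bx)|\exe e^{nH(Y|X)}$, and $c_n(\bx)$ is subexponential. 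Using $H(X|Y)-H(Y|X)=H(X)-H(Y)$, this simplifies to
\begin{equation}
Q_i^*(\by)\exe \exp\left\{-n\left[H(\hat P_\by)+\tilde{\calD}_{\Delta_i}(\hat P_\by,P_i)\right]\right\},\qquad i=0,1.
\end{equation}

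With $\Phi^{\dag}$ in \eqref{optimum_defence_SA-2_2}, the acceptance region for $\calH_0$ is the set of types satisfying $\tilde{\calD}_{\Delta_0}^n(\hat P_\by,P_0)\le \tilde{\calD}_{\Delta_1}^n(\hat P_\by,P_1)+a$. Combining this with the expression for $Q_1^*(\by)$ and multiplying by the type class size $|\calT(\hat P_\by)|\exe e^{nH(\hat P_\by)}$, grouping sequences by their type and noting that the number of types is polynomial in $n$, Sanov-type reasoning gives
\begin{equation}
\varepsilon_{\mbox{\tiny FN}}^{\dag}=\min_{\{P_Y:\,\tilde{\calD}_{\Delta_0}(P_Y,P_0)\le \tilde{\calD}_{\Delta_1}(P_Y,P_1)+a\}}\tilde{\calD}_{\Delta_1}(P_Y,P_1),
\end{equation}
and symmetrically
\begin{equation}
\varepsilon_{\mbox{\tiny FP}}^{\dag}-a=\min_{\{P_Y:\,\tilde{\calD}_{\Delta_0}(P_Y,P_0)>\tilde{\calD}_{\Delta_1}(P_Y,P_1)+a\}}\bigl(\tilde{\calD}_{\Delta_0}(P_Y,P_0)-a\bigr).
\end{equation}
To pass from the empirical $\tilde{\calD}^n$ to $\tilde{\calD}$ I would use the density of rational PMFs in the simplex together with the continuity of $\tilde{\calD}_\Delta$ in its first argument (already invoked in the paper).

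Finally, I would observe that inside the first minimization the constraint forces $\tilde{\calD}_{\Delta_0}-a\le\tilde{\calD}_{\Delta_1}$, so the objective equals $\max\{\tilde{\calD}_{\Delta_1}(P_Y,P_1),\tilde{\calD}_{\Delta_0}(P_Y,P_0)-a\}$; inside the second, the opposite inequality holds and the objective again equals the same maximum. Partitioning the simplex into these two complementary regions, the overall minimum of the two sub-problems is
\begin{equation}
\min_{P_Y}\max\{\tilde{\calD}_{\Delta_1}(P_Y,P_1),\,\tilde{\calD}_{\Delta_0}(P_Y,P_0)-a\},
\end{equation}
which is \eqref{expon_B}. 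The main obstacle I expect is the rigorous derivation of the single-letter asymptotics of $Q_i^*(\by)$, in particular collecting the subexponential factors ($c_n(\bx)$, type-class enumeration constants) and handling the minimization over admissible joint types; convexity of $\tilde{\calD}_\Delta$ from Property \ref{property_convex} ensures the limiting minimum is attained and behaves continuously, which is what allows the passage from the empirical $\tilde{\calD}^n$ to $\tilde{\calD}$ in the final decision-region analysis.
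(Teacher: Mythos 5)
Your proposal is correct and follows essentially the same route as the paper's proof: the method-of-types single-letter asymptotics $Q_i^*(\by)\exe e^{-n[\hat H_{\by}(Y)+\tilde{\calD}_{\Delta_i}^n(\hat P_{\by},P_i)]}$, Sanov-type evaluation of the FN and FP exponents over the two complementary decision regions of $\Phi^{\dag}$, the identity that the exponent of the sum $P_{\mbox{\tiny FN}}+e^{an}P_{\mbox{\tiny FP}}$ is the minimum of the two exponents, and the final partition-of-the-simplex argument turning the two constrained minimizations into $\min_{P_Y}\max\{\tilde{\calD}_{\Delta_1}(P_Y,P_1),\tilde{\calD}_{\Delta_0}(P_Y,P_0)-a\}$, with the passage from $\tilde{\calD}^n$ to $\tilde{\calD}$ justified exactly as in the paper via density of types and continuity (Property \ref{property_convex}). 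The only cosmetic difference is the order of operations (you split the payoff into the two exponents first, the paper bounds the probabilities first and combines at the end), which does not change the substance.
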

The proof appears in Appendix \ref{Appx.theo.e.e.B}.

According to  Theorem \ref{theo_e.e_B},
the asymptotic exponent of $u$ is zero
if there exists a PMF $P_Y^*$
with $\Delta_1$-limited expected distortion from $P_1$
such that  $\tilde{\calD}_{\Delta_0}(P_Y^*,P_0) \le a$.
%
%
%
Therefore, when we focus on the case of zero asymptotic exponent of the payoff,
the parameter $a$ plays a role similar
to $\lambda$ in the Neyman--Pearson game.
%
%
By further inspecting the exponent expressions of Theorems \ref{theo_e.e_B} and  \ref{theo_e.e_2},
we observe that,  when $a=\lambda$,
the exponent in \eqref{expon_B} is smaller than
or equal to the one in \eqref{error_expon_2},
where
equality holds only when both \eqref{expon_B} and \eqref{error_expon_2}
vanish.
%
%
However, comparing these two cases in the general case is difficult
because of the different definition of the payoff functions and, in particular, the different role taken by the parameters $\lambda$ and $a$.
In the Neyman--Pearson game, in fact, the payoff corresponds to the FN probability and is not affected by the value of the FP probability, provided that its exponent is larger than $\lambda$; in this way, the ratio between FP and FN error exponent at the equilibrium is generally smaller than $\lambda$ (a part for the case in which the asymptotic exponent of the payoff is zero).
In the Bayesian case,
the payoff is a weighted combination of the two types of errors and then the term with the largest exponent is the dominating term, namely, the one which determines the asymptotic behavior; in this case, the parameter $a$ determines the exact tradeoff between the FP and FN exponent in the equilibrium payoff.

\section{Source Distinguishability}
\label{sec.SA_limitingPerf}

%

In this section, we investigate the performance
of the Neyman--Pearson and Bayesian games as functions of
$\lambda$ and $a$ respectively.
From the expressions of the equilibrium payoff exponents,
it is clear that
the Neyman--Pearson and the Bayesian payoffs increase as $\lambda$
and $a$ decrease, respectively.
In particular, by setting $\lambda=0$ and $a=0$, we obtain the
largest achievable payoffs of both games which correspond to the best achievable performance for the
defender.
%
%
Therefore, we say that two sources are {\em distinguishable}
under the Neyman--Pearson (resp.\ Bayesian)
setting,
if there exists a value of $\lambda$  (resp.\
$\alpha$) such that
the FP and FN exponents at the equilibrium of the game are simultaneously strictly positive.
When such a condition does not hold, we say that
the sources are indistinguishable. 
Specifically, in this section, we characterize, under both the Neyman--Pearson and the Bayesian settings, the {\em indistinguishability region}, defined as the set of the alternative sources
that cannot be distinguished from a given source $P_0$,
given the attack distortion levels $\Delta_0$ and $\Delta_1$.
Although each game has a different asymptotic behavior,
we will see that the indistinguishability regions in the
Neyman--Pearson and the Bayesian settings are
the same.
%
The study of the distinguishability between
the sources under adversarial conditions, performed in this section, in a way extends the Chernoff-Stein
lemma \cite{CandT} to the adversarial setup (see \cite{BT_SM}).
%


We start by proving the following result for the Neyman--Pearson game.
\begin{theorem}
\label{theorem_EMD_FA}
Given two memoryless sources $P_0$ and $P_1$ and distortion
levels $\Delta_0$ and $\Delta_1$, the maximum achievable
FN exponent for the
Neyman--Pearson game is:
\begin{align}
\label{best_e_e_NP}
\underset{\lambda \rightarrow 0}{\lim} \hspace{0.07cm} \varepsilon_{\mbox{\tiny
FN}}(\lambda)
\hspace{0.08cm}  =  \varepsilon_{\mbox{\tiny FN}}(0)  =  \underset{\{P_{Y|X}: E_{XY} d(X,Y)\le\Delta_0, \hspace{0.08cm} (P_{XY})_X = P_0\}}{\min} \tilde{\mathcal{D}}_{\Delta_1}(P_Y, P_1),
\end{align}
where $\varepsilon_{\mbox{\tiny
FN}}(\lambda)$ is
as in Theorem \ref{theo_e.e_2}.
\end{theorem}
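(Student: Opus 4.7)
My plan is to derive Theorem \ref{theorem_EMD_FA} directly from Theorem \ref{theo_e.e_2}, which gives $\varepsilon_{\mbox{\tiny FN}}(\lambda) = \min_{P_Y \in \mathcal{F}(\lambda)} \tilde{\calD}_{\Delta_1}(P_Y, P_1)$ with $\mathcal{F}(\lambda) \dfn \{P_Y : \tilde{\calD}_{\Delta_0}(P_Y, P_0) \le \lambda\}$. The argument splits into two essentially independent pieces: a topological step verifying that $\lim_{\lambda \downarrow 0} \varepsilon_{\mbox{\tiny FN}}(\lambda) = \varepsilon_{\mbox{\tiny FN}}(0)$, and an algebraic step that recasts $\varepsilon_{\mbox{\tiny FN}}(0)$ in the form claimed on the right-hand side of \eqref{best_e_e_NP}.

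For the topological step, since $\mathcal{F}(\lambda')\subseteq\mathcal{F}(\lambda)$ whenever $\lambda' \le \lambda$, the map $\lambda \mapsto \varepsilon_{\mbox{\tiny FN}}(\lambda)$ is non-increasing; together with the observation that any minimizer of $\varepsilon_{\mbox{\tiny FN}}(0)$ remains feasible for every $\lambda>0$, this yields $\lim_{\lambda\downarrow 0}\varepsilon_{\mbox{\tiny FN}}(\lambda)\le \varepsilon_{\mbox{\tiny FN}}(0)$. For the reverse inequality, I would invoke compactness of the probability simplex: given $\lambda_n \downarrow 0$ and a corresponding minimizer $P_Y^{(n)}$, I would extract a convergent subsequence $P_Y^{(n_k)}\to P_Y^*$; continuity of $\tilde{\calD}_{\Delta_0}(\cdot,P_0)$ then forces $P_Y^*\in\mathcal{F}(0)$, and continuity of $\tilde{\calD}_{\Delta_1}(\cdot,P_1)$ gives $\lim_k \varepsilon_{\mbox{\tiny FN}}(\lambda_{n_k}) = \tilde{\calD}_{\Delta_1}(P_Y^*,P_1) \ge \varepsilon_{\mbox{\tiny FN}}(0)$, closing the inequality.

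For the algebraic step, I would unfold definition \eqref{eq.tilde_D_limit}: the equality $\tilde{\calD}_{\Delta_0}(P_Y, P_0) = 0$ holds if and only if there exists $P_{X|Y}$ with $E_{XY} d(X,Y) \le \Delta_0$ such that $\calD(P_X\|P_0) = 0$, i.e., $P_X = P_0$. Equivalently, $\mathcal{F}(0)$ is precisely the set of $Y$-marginals of joint distributions $P_{XY}$ whose $X$-marginal equals $P_0$ and whose expected distortion is at most $\Delta_0$; reparametrizing such a joint through $P_{Y|X}$ (so $P_{XY}=P_0\cdot P_{Y|X}$ and $P_Y(y)=\sum_x P_0(x)P_{Y|X}(y|x)$) translates the constraint into exactly $\{P_{Y|X}: E_{XY}d(X,Y)\le \Delta_0,\, (P_{XY})_X=P_0\}$, matching the right-hand side of \eqref{best_e_e_NP}. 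I expect the compactness--continuity step to be the main subtlety, since it is where the interchange of limit and minimum rests; it relies on the continuity of $\tilde{\calD}_{\Delta}$ in its first argument (noted after \eqref{eq.tilde_D_limit}) together with non-emptiness of $\mathcal{F}(0)$, which is automatic as soon as $d(x,x)=0$ since the identity conditional yields $P_Y=P_0$ with zero expected distortion.
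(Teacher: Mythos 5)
Your proposal is correct and follows essentially the same route as the paper: the result is obtained from Theorem \ref{theo_e.e_2} by establishing continuity of $\varepsilon_{\mbox{\tiny FN}}(\lambda)$ as $\lambda\rightarrow 0^+$ and then rewriting the zero-level set $\{P_Y:\tilde{\calD}_{\Delta_0}(P_Y,P_0)=0\}$ as the set of output PMFs of $\Delta_0$-admissible channels fed by $P_0$, which is exactly the constraint set on the right-hand side of \eqref{best_e_e_NP}. The only difference is in the justification of the limit--minimum interchange: the paper appeals to continuity of $\tilde{\calD}_{\Delta}$ in $P_Y$ together with a density argument resting on Property \ref{property_convex}, whereas you use compactness of the probability simplex and a convergent subsequence of minimizers, an equally valid and self-contained way of proving the same continuity claim.
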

%
%
%
The theorem is an immediate consequence of the continuity of $\varepsilon_{\mbox{\tiny FN}}(\lambda)$ as
$\lambda \rightarrow 0^+$, which
follows by the continuity of $\tilde{\mathcal{D}}_{\Delta}$ with respect to $P_Y$ and the density of the set $\{P_Y:
\tilde{\calD}_{\Delta_0}(P_Y, P_0) \le \lambda\}$ in $\{P_Y:
\tilde{\calD}_{\Delta_0}(P_Y, P_0) = 0\}$ as $\lambda \rightarrow 0^+$ \footnote{It holds true from Property \ref{property_convex}.}.


We notice that, if $\Delta_0 = \Delta_1 = 0$, there is only an admissible point in the set in \eqref{best_e_e_NP}, for which $P_Y = P_0$; then, $\varepsilon_{\mbox{\tiny FN}}(0) = {\calD}(P_0 || P_1)$, which corresponds to the best achievable FN exponent known from the classical literature for the non-adversarial case (Stein lemma \cite{CandT}, Theorem 11.8.3).

Regarding the Bayesian setting,
we  have the following theorem, the proof of which
appears in  Appendix
\ref{Appx.theorem_EMD_FA_B}.
\begin{theorem}
\label{theorem_EMD_FA_B}
Given two memoryless sources $P_0$ and $P_1$ and distortion levels
$\Delta_0$ and $\Delta_1$, the maximum achievable exponent of the
equilibrium Bayes payoff is
\begin{align}
 - \lim_{a \rightarrow 0} \lim_{n \rightarrow \infty} & \frac{1}{n}  \ln \left(u(\Phi^{\dag}, (A_{\Delta_0}^*, A_{\Delta_1}^*)) \right)=  \nonumber\\
&  \min_{P_Y} \left(\max\left\{
\tilde{\calD}_{\Delta_1}(P_Y,P_1), \tilde{\calD}_{\Delta_0}(P_Y,P_0)\right\}\right),
\label{best_e_e_B}
\end{align}
%
%
where the inner limit at the left hand side is  as defined in Theorem \ref{theo_e.e_B}.
\end{theorem}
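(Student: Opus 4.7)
The plan is to leverage Theorem \ref{theo_e.e_B} and reduce the statement to a continuity property of the equilibrium exponent in the parameter $a$. By Theorem \ref{theo_e.e_B}, for each fixed $a$ the inner limit in \eqref{best_e_e_B} equals
\begin{equation*}
F(a) \dfn \min_{P_Y} \max\left\{\tilde{\calD}_{\Delta_1}(P_Y,P_1),\ \tilde{\calD}_{\Delta_0}(P_Y,P_0) - a\right\},
\end{equation*}
and the right-hand side of \eqref{best_e_e_B} is exactly $F(0)$. Hence the theorem reduces to showing $\lim_{a \to 0} F(a) = F(0)$. Well-posedness of $F$ as a minimum (rather than merely an infimum) follows from the compactness of the probability simplex over $\calA$ together with the continuity of $\tilde{\calD}_{\Delta}(\cdot,P)$ in its first argument, stated in the discussion preceding Property \ref{property_convex}.

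The key step is to show that $F$ is $1$-Lipschitz near $a=0$. For $a \ge 0$, substituting the minimizer of $F(0)$ into the objective defining $F(a)$ immediately gives $F(a) \le F(0)$, since subtracting a nonnegative quantity from one argument of the max can only decrease it. In the reverse direction, I would use the elementary inequality $\max\{\alpha,\beta\} \le \max\{\alpha,\beta-a\} + a$, valid for all real $\alpha,\beta$ and all $a \ge 0$, applied pointwise in $P_Y$ with $\alpha = \tilde{\calD}_{\Delta_1}(P_Y,P_1)$ and $\beta = \tilde{\calD}_{\Delta_0}(P_Y,P_0)$, and then minimize over $P_Y$ on both sides to obtain $F(0) \le F(a) + a$. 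Combining the two bounds yields $|F(a) - F(0)| \le a$ for $a \ge 0$; a symmetric argument (writing $\tilde{\calD}_{\Delta_0}(P_Y,P_0) - a = \tilde{\calD}_{\Delta_0}(P_Y,P_0) + |a|$) handles $a \le 0$, so $|F(a) - F(0)| \le |a|$ in a neighborhood of zero. Sending $a \to 0$ completes the proof.

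The only real obstacle is verifying this Lipschitz bound cleanly; once in hand, the rest of the argument is a direct invocation of Theorem \ref{theo_e.e_B}. Notably, no appeal to the convexity of level sets (Property \ref{property_convex}) is required here, in contrast to Theorem \ref{theorem_EMD_FA}, because in the Bayesian case the outer minimization in $F(a)$ is unconstrained over the simplex, so the density/continuity issue that arises in the Neyman--Pearson analysis does not appear.
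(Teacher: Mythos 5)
Your proposal is correct and takes essentially the same route as the paper: both reduce the statement to Theorem \ref{theo_e.e_B} for each fixed $a$ and then justify exchanging the limit $a \to 0$ with the minimization over $P_Y$. Your explicit bound $|F(a)-F(0)|\le a$ is simply a spelled-out version of the uniform convergence of $\max\{\tilde{\calD}_{\Delta_1}(P_Y,P_1),\tilde{\calD}_{\Delta_0}(P_Y,P_0)-a\}$ to its $a=0$ counterpart, which is exactly the justification the paper invokes for that exchange.
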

%
%

Since $\tilde{\calD}_{\Delta_1}(P_Y,P_1)$, and similarly
$\tilde{\calD}_{\Delta_0}(P_Y,P_0)$, are convex functions of $P_Y$, and reach their minimum in $P_1$, resp. $P_0$,\footnote{The fact that $\tilde{\calD}_{\Delta_0}$ ($\tilde{\calD}_{\Delta_1}$) is 0 in a $\Delta_0$-limited ($\Delta_1$-limited) neighborhood of $P_0$ ($P_1$), and not just in $P_0$ ($P_1$), does not affect the argument.}
the minimum over $P_Y$ of the maximum between these quantities
(right-hand side of \eqref{best_e_e_B}) is attained when
$\tilde{\calD}_{\Delta_1}(P_Y^*,P_1) =  \tilde{\calD}_{\Delta_0}(P_Y^*,P_0)$,
for some PMF $P_Y^*$.
%
%
%
%
This resembles the best achievable exponent in the Bayesian
probability of error for the non-adversarial case, which is
attained when ${\calD} (P_Y^*\|P_0) = {\calD}(P_Y^*\|P_1)$ for some
$P_Y^*$  (see \cite{CandT}, Theorem 11.9.1).
In that case, from the expression of the divergence function,
such $P_Y^*$ is found in a closed form and the resulting exponent is equivalent to the
Chernoff information (see Section 11.9 in \cite{CandT}).


From Theorem \ref{theorem_EMD_FA} and \ref{theorem_EMD_FA_B},
it follows that there is no positive $\lambda$, res. $a$, for which the asymptotic exponent of the equilibrium payoff is strictly positive,
if there exists a PMF $P_Y$
such that the following conditions are both satisfied:
\begin{equation}
\label{system_cond}
\left\{\begin{array}{ll}
\tilde{\calD}_{\Delta_0}(P_Y, P_0)=0 & \\
\tilde{\calD}_{\Delta_1}(P_Y, P_1)=0.
\end{array}\right.
\end{equation}
In this case, then, $P_0$ and $P_1$ are indistinguishable
under both the Neyman--Pearson and the Bayesian settings.
We observe that the condition $\tilde{\calD}_{\Delta}(P_Y, P_X) = 0$ is equivalent to the
following:\footnote{For ease of notation, given a joint PMF $Q_{XY}$ with marginal PMFs $P_X$ and $P_Y$,
we use notation $(Q_{XY})_Y = P_Y$ (res. $(Q_{XY})_X = P_X$) as short for $\sum_x Q_{XY}(x,y) =
P_Y(y)$, $\forall y \in \calA$ (res. $\sum_y Q_{XY}(x,y) =
P_X(x)$, $\forall x \in \cal{A}$).}
\begin{equation}
\label{Wasserstein_dist}
\min_{Q_{XY}:\tiny{\substack{ (Q_{XY})_X = P_X\\ (Q_{XY})_Y = P_Y}}} E_{XY} d(X,Y) \le \Delta,
\end{equation}
where the expectation $E_{XY}$ is w.r.t $Q_{XY}$.
In computer vision applications,
the left-hand side of \eqref{Wasserstein_dist} is known as the {\em Earth
Mover Distance} (EMD)
 between $P_X$ and $P_Y$,
which is denoted by $\text{\em EMD}_{d}(P_X,P_Y)$ (or, by symmetry, $\text{\em EMD}_{d}(P_Y,P_X)$) \cite{RTG00}.
It is also known as the
$\rho$-bar distortion measure \cite{gray2011entropy}.

A brief comment concerning the analogy between the minimization in  \eqref{Wasserstein_dist}
and {\em optimal transport theory} is worth.
%
The minimization problem  in \eqref{Wasserstein_dist}
is known in the Operations Research literature as
{\em Hitchcock Transportation Problem} (TP) \cite{Hitchcock}. 
Referring to the original Monge formulation
of this problem \cite{monge1781}, $P_X$ and $P_Y$ can be interpreted as
two different ways of piling up a certain amount of soil; then,
$P_{XY}(x,y)$ denotes the quantity of soil shipped
from location (source) $x$ in $P_X$ to location (sink) $y$ in $P_Y$
and $d(x, y)$ is the cost for shipping a unitary amount of soil from $x$ to $y$.
In transport theory terminology, $P_{XY}$ is referred to as {\em transportation map}.
According to this perspective, evaluating the {\em EMD}
corresponds to finding the minimal transportation cost of moving a pile of soil into the other.
Further insights on this parallel can be found in \cite{BT_SM}.\\

We summarize our findings in the following
corollary, which characterizes the conditions for
distinguishability under both the Neyman--Pearson and the Bayesian setting.
\begin{corollary}[Corollary  to Theorems \ref{theorem_EMD_FA} and \ref{theorem_EMD_FA_B}]
\label{cor_Gamma}
Given a memoryless source $P_0$ and
distortion levels $\Delta_0$ and $\Delta_1$, the set of the PMFs that
cannot be distinguished from $P_0$ in both the Neyman--Pearson and
Bayesian settings is given by
\begin{equation}
\label{A_winning_region}
\Gamma= \left\{P :
\min_{P_Y : \text{\em EMD}_{d}(P_Y,P_0) \le \Delta_0} \text{\em EMD}_{d}(P_Y,P) \le \Delta_1\right\}.
\end{equation}
\end{corollary}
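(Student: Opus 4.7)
The plan is to collect the pieces already developed in the text preceding the Corollary and recast the common indistinguishability condition in the language of the Earth Mover Distance.

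First, I would argue that under either setting the sources $P_0$ and $P$ are indistinguishable if and only if there exists a PMF $P_Y$ satisfying the system \eqref{system_cond}. For the Neyman--Pearson setting, $\varepsilon_{\mbox{\tiny FN}}(\lambda)$ defined in Theorem~\ref{theo_e.e_2} is non-increasing in $\lambda$, since the feasible set $\{P_Y:\tilde{\calD}_{\Delta_0}(P_Y,P_0)\le\lambda\}$ shrinks as $\lambda\downarrow 0$; hence the existence of some $\lambda>0$ with $\varepsilon_{\mbox{\tiny FN}}(\lambda)>0$ is equivalent to $\lim_{\lambda\to 0^+}\varepsilon_{\mbox{\tiny FN}}(\lambda)=\varepsilon_{\mbox{\tiny FN}}(0)>0$, using the continuity argument that accompanies Theorem~\ref{theorem_EMD_FA}. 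By the formula in Theorem~\ref{theorem_EMD_FA}, $\varepsilon_{\mbox{\tiny FN}}(0)=0$ precisely when there exists $P_Y$ with $\tilde{\calD}_{\Delta_0}(P_Y,P_0)=0$ and $\tilde{\calD}_{\Delta_1}(P_Y,P)=0$, i.e., \eqref{system_cond}. An identical monotonicity-and-continuity argument applied to the exponent in Theorem~\ref{theo_e.e_B} (non-increasing in $a$) together with Theorem~\ref{theorem_EMD_FA_B} shows that Bayesian indistinguishability reduces to the same system \eqref{system_cond}, since the limiting min-max exponent vanishes iff some $P_Y$ zeroes both $\tilde{\calD}_{\Delta_1}(P_Y,P)$ and $\tilde{\calD}_{\Delta_0}(P_Y,P_0)$ simultaneously.

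Second, I would translate \eqref{system_cond} into EMD form. By the definition of $\tilde{\calD}_{\Delta}$ in \eqref{eq.tilde_D_limit}, $\tilde{\calD}_{\Delta}(P_Y,P)=0$ holds iff there is a conditional $P_{X|Y}$ with $E_{XY}d(X,Y)\le\Delta$ whose induced $X$-marginal equals $P$; equivalently, there is a joint PMF $Q_{XY}$ with marginals $P$ and $P_Y$ and $E_{XY}d(X,Y)\le\Delta$. Comparing with \eqref{Wasserstein_dist}, this is precisely $\text{\em EMD}_{d}(P,P_Y)\le\Delta$. Thus \eqref{system_cond} is equivalent to the existence of some $P_Y$ with $\text{\em EMD}_{d}(P_Y,P_0)\le\Delta_0$ and $\text{\em EMD}_{d}(P_Y,P)\le\Delta_1$, which is the same as $\min_{P_Y:\text{\em EMD}_{d}(P_Y,P_0)\le\Delta_0}\text{\em EMD}_{d}(P_Y,P)\le\Delta_1$, i.e., the defining condition \eqref{A_winning_region} of $\Gamma$.

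The main obstacle in making this rigorous is the continuity and monotonicity step that converts ``there exists some positive $\lambda$'' (resp.\ positive $a$) into ``the limiting exponent at $\lambda=0$ (resp.\ $a=0$) is strictly positive''. This requires the convexity and continuity of $\tilde{\calD}_{\Delta}$ in its first argument (Property~\ref{property_convex}) together with the compactness of the probability simplex, so that the relevant infima are attained and the minimizing sets behave continuously in the parameter -- the same ingredients invoked in the proofs of Theorems~\ref{theorem_EMD_FA} and~\ref{theorem_EMD_FA_B}. Once this continuity passage is secured, the rest is a direct unfolding of the definitions.
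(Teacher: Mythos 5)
Your proof is correct and takes essentially the same route as the paper: it reduces indistinguishability in both settings to the solvability of the system \eqref{system_cond} via the zero-threshold exponents of Theorems \ref{theorem_EMD_FA} and \ref{theorem_EMD_FA_B}, and then translates the condition $\tilde{\calD}_{\Delta}(P_Y,P)=0$ into the EMD constraint \eqref{Wasserstein_dist}. The only difference is that you make explicit the monotonicity-plus-continuity step (and the attainment of the minima by compactness) needed for the converse direction, which the paper leaves implicit in its discussion preceding the corollary.
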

%
Set $\Gamma$ is the indistinguishability region. By definition (see the beginning of this section),
%
%
the PMFs inside $\Gamma$ are those for which, as a consequence of the attack, the FP and FN probabilities cannot go to zero simultaneously with strictly positive exponents.
Clearly, if $\Delta_0 = \Delta_1 = 0$, that is, in the non-adversarial case, $\Gamma = \{P_0\}$, as any two distinct sources are always distinguishable.


When $d$ is a metric, for a given $P \in \Gamma$, the computation of
the optimum $P_Y$ can be traced back to the computation of the
{\em EMD} between $P_0$ and $P$, as stated by the following corollary,
whose proof appears in Appendix \ref{Appx.cor_Gamma_d_metric}.
\begin{corollary}[Corollary  to Theorems \ref{theorem_EMD_FA} and \ref{theorem_EMD_FA_B}]
\label{cor_Gamma_d_metric}
When $d$ is a metric, given the source $P_0$ and distortion levels
$\Delta_0$ and $\Delta_1$, for any fixed $P$, the minimum  in \eqref{A_winning_region}
is achieved when
\begin{equation}
\label{optimumPY}
P_Y = \alpha P_0 + (1-\alpha)P, \quad \alpha = 1 - \frac{\Delta_0}{\text{\em EMD}(P_0,P)}.
\end{equation}
Then, the set of PMFs that cannot be distinguished from $P_0$
in the Neyman--Pearson and Bayesian setting is given by
\begin{equation}
\label{A_winning_region_d}
\Gamma = \{P : \text{\em EMD}_d(P_0, P) \le \Delta_0 + \Delta_1\}.
\end{equation}
\end{corollary}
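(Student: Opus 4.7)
\textbf{Proof plan for Corollary~\ref{cor_Gamma_d_metric}.}

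The plan hinges on the fact that, when $d$ is a metric, $\text{EMD}_d$ is itself a metric on the probability simplex (it is the $1$-Wasserstein distance). Under this observation the inner minimization in \eqref{A_winning_region} reduces to a projection problem: among all $P_Y$ lying inside the $\text{EMD}_d$-ball of radius $\Delta_0$ around $P_0$, find the one nearest, in $\text{EMD}_d$, to $P$. The claim of the corollary is that the optimum is attained along the Wasserstein geodesic joining $P_0$ and $P$, specifically at the mixture $P_Y^{*}=\alpha P_0+(1-\alpha)P$ with $\alpha=1-\Delta_0/\text{EMD}_d(P_0,P)$. Both bounds will match as a consequence of the triangle inequality, so no variational calculus is needed.

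The first step is the triangle-inequality lower bound: for every $P_Y$ with $\text{EMD}_d(P_Y,P_0)\le\Delta_0$,
\begin{equation*}
\text{EMD}_d(P_Y,P)\ \ge\ \text{EMD}_d(P_0,P)-\text{EMD}_d(P_0,P_Y)\ \ge\ \text{EMD}_d(P_0,P)-\Delta_0.
\end{equation*}
(In the degenerate case $\text{EMD}_d(P_0,P)\le\Delta_0$, the inner minimum is $0$, attained trivially at $P_Y=P$, and \eqref{optimumPY} should be read as $\alpha=0$.)

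The second step matches this bound by an explicit coupling construction. Let $R(u,v)$ be an optimal coupling of $(P_0,P)$, i.e., a joint PMF with marginals $P_0$ and $P$ whose expected cost equals $\text{EMD}_d(P_0,P)$. Define
\begin{equation*}
S(x,y)\ =\ \alpha\,P_0(x)\,\delta_{x,y}\ +\ (1-\alpha)\,R(y,x),
\end{equation*}
with $\delta_{x,y}$ the Kronecker delta. A direct check of row and column sums shows that $S$ has marginals $P_Y^{*}$ and $P_0$, and its expected cost equals $(1-\alpha)\,\text{EMD}_d(P_0,P)=\Delta_0$, where symmetry $d(x,y)=d(y,x)$ is used. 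Hence $\text{EMD}_d(P_Y^{*},P_0)\le\Delta_0$, so $P_Y^{*}$ is admissible. An entirely symmetric construction, obtained by interchanging the roles of $P_0$ and $P$ (namely $T(x,y)=(1-\alpha)P(x)\delta_{x,y}+\alpha R(x,y)$), yields $\text{EMD}_d(P_Y^{*},P)\le\alpha\,\text{EMD}_d(P_0,P)=\text{EMD}_d(P_0,P)-\Delta_0$, matching the first-step lower bound with equality and so certifying the optimality of $P_Y^{*}$.

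Combining both steps, the inner minimum in \eqref{A_winning_region} equals $[\text{EMD}_d(P_0,P)-\Delta_0]_+$, so $P\in\Gamma$ iff $\text{EMD}_d(P_0,P)\le\Delta_0+\Delta_1$, which is precisely \eqref{A_winning_region_d}. The only nontrivial obstacle is verifying that the elementary coupling $S$ (and its symmetric counterpart $T$) is admissible and that its cost collapses cleanly; this is routine once the mixture structure of $P_Y^{*}$ and the metric properties of $d$ are exploited, so no machinery beyond the Wasserstein triangle inequality is needed.
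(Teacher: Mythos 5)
Your proposal is correct and follows essentially the same route as the paper: the triangle-inequality lower bound $\text{\em EMD}_d(P_Y,P)\ge \text{\em EMD}_d(P_0,P)-\Delta_0$, matched by the mixture $P_Y=\alpha P_0+(1-\alpha)P$ via couplings that keep a fraction of the mass in place and transport the rest along an optimal plan for $(P_0,P)$ --- your $S$ and $T$ are exactly the paper's couplings in \eqref{clever_joint}--\eqref{clever_joint2}. Your explicit treatment of the degenerate case $\text{\em EMD}_d(P_0,P)\le\Delta_0$ is a minor (welcome) addition not spelled out in the paper.
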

According to Corollary \ref{cor_Gamma_d_metric},
when $d$ is a metric,
the performance of the game depends only on the  sum of
distortions,
$\Delta_0 + \Delta_1$, and it is immaterial how this amount is
distributed between the two  hypotheses.

In the general case ($d$ not a metric), the condition on the {\em EMD} stated in \eqref{A_winning_region_d}
is sufficient in order for $P_0$ and $P$ be indistinguishable, that is
%
$\Gamma \supseteq \{P : \text{\em EMD}_d(P_0, P) \le \Delta_0 + \Delta_1\}$ (see discussion in
Appendix \ref{Appx.cor_Gamma_d_metric}, at the end of the proof of Corollary \ref{cor_Gamma_d_metric}).
%
%
Furthermore,
in the case of an $L_p^p$
distortion function ($p \ge 1$),  i.e., $d(\bx, \by) = \sum_{i = 1}^n |x_i - y_i|^p$, we have the following
corollary.
%
\begin{corollary}[Corollary  to Theorems \ref{theorem_EMD_FA} and \ref{theorem_EMD_FA_B}]
\label{cor_L_p_p}
When $d$ is the $L_p^p$ distortion function, for some $p \ge 1$,
the set $\Gamma$ can be bounded as follows
%
\begin{equation}
\label{A_winning_region_L_2}
\Gamma \subseteq \{P :
\text{\em EMD}_{L_p^p}(P_0, P) \le ({\Delta_0}^{1/p} + {\Delta_1}^{1/p})^p\}.
\end{equation}
\end{corollary}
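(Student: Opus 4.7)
The plan is to reduce the claim to the triangle inequality for the $p$-Wasserstein distance $W_p(\cdot,\cdot)\dfn \text{\em EMD}_{L_p^p}(\cdot,\cdot)^{1/p}$. By Corollary \ref{cor_Gamma}, a PMF $P$ lies in $\Gamma$ if and only if there exists a PMF $P_Y$ satisfying $\text{\em EMD}_{L_p^p}(P_Y,P_0)\le\Delta_0$ and $\text{\em EMD}_{L_p^p}(P_Y,P)\le\Delta_1$ simultaneously; equivalently, $W_p(P_0,P_Y)\le\Delta_0^{1/p}$ and $W_p(P_Y,P)\le\Delta_1^{1/p}$. Once the triangle inequality $W_p(P_0,P)\le W_p(P_0,P_Y)+W_p(P_Y,P)$ is established, raising to the $p$-th power immediately gives $\text{\em EMD}_{L_p^p}(P_0,P)\le(\Delta_0^{1/p}+\Delta_1^{1/p})^p$, which is exactly the inclusion in \eqref{A_winning_region_L_2}.

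To establish the triangle inequality for $W_p$, I would use the classical glueing construction together with Minkowski's inequality. Let $Q^*_{XY}$ attain the minimum defining $\text{\em EMD}_{L_p^p}(P_0,P_Y)$ and let $Q^*_{YZ}$ attain the one defining $\text{\em EMD}_{L_p^p}(P_Y,P)$. Define a joint PMF on $\calA^3$ by $\pi(x,y,z)=Q^*_{XY}(x,y)\,Q^*_{YZ}(z\mid y)$ whenever $P_Y(y)>0$ (and zero otherwise), where $Q^*_{YZ}(z\mid y)\dfn Q^*_{YZ}(y,z)/P_Y(y)$. A routine check shows that the $(X,Y)$- and $(Y,Z)$-marginals of $\pi$ coincide with $Q^*_{XY}$ and $Q^*_{YZ}$ respectively, so the $(X,Z)$-marginal of $\pi$ is a feasible (though not necessarily optimal) coupling of $P_0$ and $P$. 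Applying Minkowski's inequality in $L^p(\pi)$ to the pointwise bound $|X-Z|\le|X-Y|+|Y-Z|$ yields
\[
W_p(P_0,P)\le\bigl(E_\pi|X-Z|^p\bigr)^{1/p}\le\bigl(E_\pi|X-Y|^p\bigr)^{1/p}+\bigl(E_\pi|Y-Z|^p\bigr)^{1/p}=W_p(P_0,P_Y)+W_p(P_Y,P),
\]
which is the desired triangle inequality. Note that Minkowski requires $p\ge 1$, matching the hypothesis of the corollary.

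The argument is short and the main step, the glueing construction, is standard in optimal transport theory; no essentially new obstacle arises. The only subtlety worth flagging is that, in contrast with the equality obtained in Corollary \ref{cor_Gamma_d_metric} when $d$ itself is a metric, the present inclusion is only one-sided. This is expected: for $p>1$ the $L_p^p$ cost is not a metric, so the $(X,Z)$-marginal of the glued plan $\pi$ need not be $L_p^p$-optimal, and conversely a PMF $P$ satisfying $\text{\em EMD}_{L_p^p}(P_0,P)\le(\Delta_0^{1/p}+\Delta_1^{1/p})^p$ need not admit a decomposition into two sub-plans with budgets $\Delta_0$ and $\Delta_1$ as required by the definition of $\Gamma$.
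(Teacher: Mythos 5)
Your proof is correct, and its skeleton is the same as the paper's: both arguments glue the optimal coupling attaining $\text{\em EMD}_{L_p^p}(P_0,P_Y)$ and the one attaining $\text{\em EMD}_{L_p^p}(P_Y,P)$ into a joint law $\pi$ on triples whose $(X,Y)$- and $(Y,Z)$-marginals are those couplings, use the pointwise bound $|X-Z|\le |X-Y|+|Y-Z|$, and conclude that the budgets add after taking $1/p$-th powers. Where you differ is in the key analytic step: you invoke Minkowski's inequality in $L^p(\pi)$ (equivalently, the triangle inequality for $W_p=\text{\em EMD}_{L_p^p}^{1/p}$), whereas the paper re-derives that inequality by expanding $(|X-Y|+|Y-Z|)^p$ via the binomial theorem and applying H\"older's inequality term by term with exponents $r=p/(p-t)$. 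Your route is a bit cleaner and strictly more general: the binomial expansion in the paper tacitly treats $p$ as a positive integer, while Minkowski covers every real $p\ge 1$, which is exactly the hypothesis of the corollary; what the paper's inlined computation buys is self-containedness, essentially re-proving Minkowski from H\"older. Your closing observation that the inclusion is only one-sided because the glued plan need not be $L_p^p$-optimal (and no converse decomposition is claimed) matches the paper's discussion contrasting this case with the metric case of Corollary \ref{cor_Gamma_d_metric}.
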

Corollary \ref{cor_L_p_p} can be proven by exploiting
the H{\"o}lder inequality \cite{keyInequalities} (see Appendix \ref{Appx.cor_L_p_p}).

\section{Conclusions}
\label{sec.conc}

We considered the problem of binary hypothesis testing when an attacker is active under both hypotheses,
and then an attack is carried out aiming at
both false negative and false positive errors. By modeling the defender-attacker interaction as a game, we
defined and solved two different detection games: the Neyman--Pearson and the Bayesian game.
This paper extends the analysis in \cite{BT13}\cite{BT13}, where the attacker is
active under the alternative hypothesis only. Another aspect of greater
generality is that here both players are allowed to use randomized
strategies.
By relying on the method of types, the main result of
this paper is the existence of an attack strategy which is both {\em dominant} and {\em universal},
that is, optimal regardless of the statistics of the sources.
The optimum attack strategy is also independent of the underlying hypothesis,
namely {\em fully-universal}, when the distortion introduced by the attacker in the two cases is the same.
From the analysis of the asymptotic behavior of the equilibrium payoff
we are able to establish conditions under which the sources can be reliably distinguished in the fully-active adversarial setup.
The theory developed permits
to assess the security of the detection in adversarial setting and
give insights on how the detector should be designed
in such a way to make the attack hard.

Among the possible directions for future work, we mention the extension to
multiple hypothesis
testing.
Another interesting direction is the extension  to continuous  alphabets,
which calls for an extension of the method of types to this case,
or to more realistic models of finite alphabet sources, still amenable to
analysis, like
Markov sources. 
As mentioned in the introduction, it would be also relevant to overcome the limitation to first order statistics, by extending the analysis to  higher order statistics and getting equilibria in a similar fashion.
Finally, we mention the case of unknown sources, where the sources
are estimated from training data, possibly corrupted by the attacker.
In this scenario, the detection game has been studied for a partially active case, with both uncorrupted and corrupted training data \cite{BTtit,BTtit18}. The extension of such analyses to the  fully active scenario considered in this paper is a further interesting direction for future research.

\section*{Acknowledgment}

We thank Alessandro Agnetis of
the University of Siena, for the useful discussions on optimization concepts underlying the
computation of the {\em EMD}.



\numberwithin{equation}{section}


\appendices



\section{Neyman--Pearson detection game}
\label{App-NPgame}

This appendix contains the proofs of the results in Section \ref{sec.SA_NP}.

\subsection{Proof of Lemma \ref{theo_D_opt}}
\label{Appx.theo_D_opt}

\renewcommand{\theequation}{\thesection.\arabic{equation}}

Whenever existent, the dominant defence strategy can be obtained by solving:
\begin{equation}
\label{strategy_Defence}
\min_{\Phi \in \calS_{D}} P_{\mbox{\tiny FN}}(\Phi, A_1),
\end{equation}
for any attack channel $A_1$.
%
%
Below, we first show that $P_{\mbox{\tiny FN}}(\Phi^*,A_1)\lexe P_{\mbox{\tiny FN}}(\Phi,A_1)$ for every $\Phi \in \calS_{D}$ and for every $A_1$, that is, $\Phi^*$ is asymptotically dominant.
Then, by proving that $\max_{A \in \calC_{\Delta_0}} P_{\mbox{\tiny FP}}(\Phi^*,A)$ fulfills the FP constraint, we show that $\Phi^*$ is also admissible. Therefore, we can conclude that $\Phi^*(\cdot|\by)$ asymptotically solves \eqref{strategy_Defence}.
%
Exploiting the memorylessness of $P_0$
and the permutation invariance of $\Phi(\calH_1|\by)$ and $d(\bx,\by)$, for every $\by^\prime\in\calA^n$ we
have,
\begin{align}
\label{derivation_opt_D}
e^{- \lambda n} \ge & \max_A \sum_{\bx,\by}P_0(\bx)A(\by|\bx)\Phi(\calH_1|\by)\nonumber\\
\ge & \sum_{\by} \left(\sum_{\bx} P_0(\bx)A_{\Delta_0}^*(\by|\bx) \right) \Phi(\calH_1|\by)\nonumber\\
= & \sum_{\by} \left(\sum_{\bx: d(\bx,\by) \le n\Delta_0} P_0(\bx) \cdot \frac{c_n(\bx)}{|\calT(\by|\bx)|} \right) \Phi(\calH_1|\by)\nonumber\\
\ge & (n + 1)^{-|\calA|\cdot(|\calA| - 1)} \sum_{\by} \left(\sum_{\bx: d(\bx,\by) \le n\Delta_0}  \cdot \frac{P_0(\bx)}{|\calT(\by|\bx)|} \right) \Phi(\calH_1|\by)\nonumber\\
{\stackrel{(a)}{\ge}} & (n + 1)^{-|\calA|\cdot(|\calA| - 1)}  |\calT(\by')| \left(\max_{\bx: d(\bx,\by') \le n\Delta_0} |\calT(\bx|\by')| \cdot \frac{P_0(\bx)}{|\calT(\by'|\bx)|} \right) \Phi(\calH_1|\by') \nonumber\\
{\stackrel{(b)}{=}} & (n + 1)^{-|\calA|\cdot(|\calA| - 1)} \Phi(\calH_1|\by') \max_{\bx: d(\bx,\by') \le n\Delta_0}  P_0(\bx) \cdot |\calT(\bx)|   \nonumber\\
\ge & \Phi(\calH_1|\by') \max_{\bx: d(\bx,\by') \le n\Delta_0}  \frac{e^{-n\calD(\hat{P}_{\bx} \| P_0)} }{(n + 1)^{|\calA|^2\cdot(|\calA| - 1)}} \nonumber\\
= & \Phi(\calH_1|\by') \frac{\exp\left\{-n  \min_{\bx: d(\bx,\by') \le n\Delta_0}  \calD(\hat{P}_{\bx} \| P_0)\right\} }{(n + 1)^{|\calA|^2 \cdot(|\calA| - 1)}},
\end{align}
%
where $(a)$ is due to the permutation invariance
of the distortion function $d$ and $(b)$
is due to the identity
$|\calT(\bx)|\cdot|\calT(\by|\bx)|\equiv
|\calT(\by)|\cdot|\calT(\bx|\by)|\equiv|\calT(\bx,\by)|$.

It now follows that
\begin{equation}
\Phi(\calH_1|\by) \lexe \exp\left\{- n \left[\lambda - \min_{\bx: d(\bx,\by) \le n\Delta_0}  \calD(\hat{P}_{\bx} \| P_0) \right]\right\}.
\end{equation}
Since $\Phi(\calH_1|\by)$ is a probability,
\begin{align}
\label{defence_proof_rel1}
 \Phi(\calH_1|\by)  & \lexe
\min\left\{1,\exp\left[- n \left(\lambda - \min_{\bx: d(\bx,\by) \le n\Delta_0}  \calD(\hat{P}_{\bx} \| P_0) \right)\right]\right\}\nonumber\\
& = \Phi^*(\calH_1|\by).
\end{align}
Consequently,
$\Phi^*(\calH_0|\by)\lexe \Phi(\calH_0|\by)$ for every $\by$, and so,
$P_{\mbox{\tiny FN}}(\Phi^*,A_1)\lexe P_{\mbox{\tiny FN}}(\Phi, A_1)$ for every $A_1$.
For convenience, let us denote
$$k_n(\by)=\lambda - \min_{\bx: d(\bx,\by) \le n\Delta_0}\calD(\hat{P}_{\bx}\|
P_0),$$
so that $ \Phi^*(\calH_1|\by) = \min\{1,e^{- n \cdot k_n(\by)}\}$.
We now show that $\Phi^*(\calH_1|\by)$ satisfies the
FP constraint, up to a polynomial term in $n$,
i.e., it satisfies the constraint asymptotically.
%
\begin{align}
\label{contraintFP_satisfaction}
\max_{A \in \calC_{\Delta_0}} P_{\mbox{\tiny FP}}(\Phi^*,A) & \le (n+1)^{|\calA|\cdot(|\calA| - 1)} P_{\mbox{\tiny FP}}(\Phi^*,A^*) \nonumber\\ & = (n+1)^{|\calA|\cdot(|\calA| - 1)} \sum_{\bx,\by}P_0(\bx)A_{\Delta_0}^*(\by|\bx)\Phi^*(\calH_1|\by)\nonumber\\
& = (n+1)^{|\calA|\cdot(|\calA| - 1)}  \sum_{(\bx, \by): d(\bx,\by) \le n\Delta_0} P_0(\bx) \cdot \frac{c_n(\bx)}{|\calT(\by|\bx)|} \cdot \Phi^*(\calH_1|\by)\nonumber\\
& \le (n+1)^{|\calA|\cdot(|\calA| - 1)}  \sum_{(\bx, \by): d(\bx,\by) \le n\Delta_0}  \frac{P_0(\bx)}{|\calT(\by|\bx)|} \cdot \Phi^*(\calH_1|\by)\nonumber\\
& \le (n+1)^{2 |\calA|\cdot(|\calA| - 1)}  \sum_{\by}  \left( \max_{\bx: d(\bx,\by) \le n\Delta_0}  |\calT(\bx|\by)| \cdot \frac{P_0(\bx)}{|\calT(\by|\bx)|} \right) \Phi^*(\calH_1|\by) \nonumber\\
& = (n+1)^{2 |\calA|\cdot(|\calA| - 1)} \left( \sum_{\hat{P}_{\by}:  k_n(\by) \ge 0} e^{- n k_n(\by)} \left( \max_{\bx: d(\bx,\by) \le n\Delta_0}  |\calT(\bx)| \cdot P_0(\bx) \right) + \right. \nonumber\\
& \hspace{4cm} \left. + \sum_{\hat{P}_{\by}:  k_n(\by) < 0}  \left( \max_{\bx: d(\bx,\by) \le n\Delta_0}  |\calT(\bx)| \cdot P_0(\bx) \right) \right) \nonumber\\
& \le (n+1)^{2 |\calA|\cdot(|\calA| - 1)} \left( \sum_{\hat{P}_{\by}:  k_n(\by) \ge 0} e^{-n \lambda} + \right. \nonumber\\
& \hspace{4cm} \left. + \sum_{\hat{P}_{\by}:  k_n(\by) < 0}  \exp\left\{-n \min_{\bx: d(\bx,\by) \le n\Delta_0}\calD(\hat{P}_{\bx} \| P_0)\right\} \right) \nonumber\\
& \le (n+1)^{(|\calA|^2 + 2 |\calA|)\cdot(|\calA| - 1) + |\calA|} e^{-n \lambda}.
\end{align}

\subsection{Proof of Property \ref{property_convex}}
\label{Appx.property}

We  next prove that
for any two PMFs $P_{Y_1}$ and $P_{Y_2}$ and any
$\lambda \in (0,1)$,
%
\begin{equation}
\label{conv_relation}
\tilde{\calD}_{\Delta}(\lambda P_{Y_1} + (1 - \lambda) P_{Y_2},P) \le \lambda \tilde{\calD}_{\Delta}(P_{Y_1}, P) +  (1 - \lambda) \tilde{\calD}_{\Delta}(P_{Y_2},P).
\end{equation}
%
%

Let us rewrite $\tilde{\calD}_{\Delta}$ in
\eqref{eq.tilde_D_limit} by expressing the minimization
in terms of the joint PMF $P_{XY}$:
\begin{equation}
\tilde{\calD}_{\Delta}(P_{Y},P) \dfn \min_{\{Q_{XY}: E_{XY} d(X,Y) \le \Delta, (Q_{XY})_Y = P_Y\}}
\calD\big( (Q_{XY})_X\|P \big),
\end{equation}
%
where we used  $(Q_{XY})_Y = P_Y$  as short for $\sum_{x}Q_{XY}(x,y) = P_Y(y)$,
$\forall y$,
and
we made explicit the dependence of $\calD(P_{X}\|P)$ on $Q_{XY}$.
Accordingly:
\begin{equation}
\label{gen_div_convex}
\tilde{\calD}_{\Delta}(\lambda P_{Y_1} + (1-\lambda) P_{Y_2},P) = \min_{\{Q_{XY}: E_{XY} d(X,Y) \le \Delta, \hspace{0.1cm}(Q_{XY})_Y = \lambda P_{Y_1} + (1-\lambda) P_{Y_2}\}}
\calD\big( (Q_{XY})_X \|P\big).
\end{equation}
%
We find convenient to rewrite the right-hand side of \eqref{gen_div_convex}  by minimizing over
pairs of PMFs $(Q_{XY}',Q_{XY}'')$ and considering the convex combination of these PMFs with weights $\lambda$ and $(1-\lambda)$, in place of $Q_{XY}$; hence
%
%
%
%
\begin{equation}
\label{gen_div_convex2}
\tilde{\calD}_{\Delta}(\lambda P_{Y_1} + (1-\lambda) P_{Y_2},P) =  \underset{(Q_{XY}', Q_{XY}'') \in \mathcal{H}}{\min}
\calD\big( \lambda  ( Q_{XY}')_X +  (1-\lambda) ( Q_{XY}'')_X \|P \big),
\end{equation}
where
\begin{align}
\label{gen_div_convex2_H}
 \mathcal{H} = & \left\{(Q_{XY}', Q_{XY}''): \lambda (Q_{XY}')_Y + (1-\lambda) (Q_{XY}'')_Y = \lambda P_{Y_1} + (1-\lambda) P_{Y_2}, \right.\nonumber\\
  & \hspace{7cm} \left.  \lambda E_{XY}' d(X,Y) + (1-\lambda) E_{XY}'' d(X,Y) \le \Delta\right\}.
\end{align}
Let
\begin{align}
 \mathcal{H}' = & \left\{Q_{XY}':   E_{XY}'' d(X,Y) \le \Delta, ( Q_{XY}')_Y = P_{Y_1}\right\} \times \left\{ Q_{XY}'': E_{XY}'' d(X,Y) \le \Delta, (Q_{XY}'')_Y = P_{Y_2}\right\};
\end{align}
then, $\mathcal{H}' \subset \mathcal{H}$, where the set $\mathcal{H}'$ is separable in $ Q_{XY}'$ and $ Q_{XY}'$.
Accordingly, \eqref{gen_div_convex2}-\eqref{gen_div_convex2_H} can be upper bounded by
%
\begin{equation}
\underset{Q_{XY}':   E_{XY}'' d(X,Y) \le \Delta, ( Q_{XY}')_Y = P_{Y_1}}{\min} \hspace{0.2cm}\underset{Q_{XY}'': E_{XY}'' d(X,Y) \le \Delta, (Q_{XY}'')_Y = P_{Y_2}}{\min}
\calD\big( \lambda  (Q_{XY}')_X +  (1-\lambda) ( Q_{XY}'')_X \|P \big).\\
\end{equation}
%
By the convexity of $\calD\big((Q_{XY})_X \| P\big)$ with respect to $Q_{XY}$\footnote{This is a consequence of the fact  that the divergence function is convex in its arguments and the operation $(\cdot)_X$ is linear (see  Theorem 2.7.2 in \cite{CandT}).},
it follows that
\begin{equation}
\calD\big( \lambda (Q_{XY}')_X +  (1-\lambda) (Q_{XY}'')_X \|P \big) \le \lambda  \calD\big(( Q_{XY}')_X\| P\big) + (1-\lambda) \calD\big( ( Q_{XY}'')_X \|P \big).
\end{equation}
Note that the above relation is not strict since it might be that $(Q_{XY}')_X = (Q_{XY}'')_X = P$.
Then, an upper bound for $\tilde{\calD}_{\Delta}(\lambda P_{Y_1} + (1-\lambda) P_{Y_2},P)$ is given by
\begin{equation}
 \underset{Q_{XY}': \sum_{x} Q_{XY}' = P_{Y_1},  E_{XY}'' d(X,Y) \le \Delta}{\min}   \lambda \calD\big((Q_{XY}')_X\| P\big) +  \underset{Q_{XY}'': (Q_{XY}'')_Y = P_{Y_2}, E_{XY}'' d(X,Y) \le \Delta}{\min} (1-\lambda)\calD\big( (Q_{XY}'')_X \|P \big),
\end{equation}
thus proving  \eqref{conv_relation}.

\subsection{Proof of Theorem \ref{theo_e.e_2}}
\label{Appx.theo_e.e_2}

We start by proving the upper bound for the FN probability:
\begin{align}
\label{derivation_opt_D}
P_{\mbox{\tiny
FN}}(\Phi^*, A_{\Delta_1}^*) =  & \sum_{\bx,\by}P_1(\bx)A_{\Delta_1}^*(\by|\bx)\Phi^*(\calH_0|\by)\nonumber\\
= & \sum_{\by}  \sum_{\bx: d(\bx,\by) \le n\Delta_1} P_1(\bx) \frac{c_n(\bx)}{|\calT(\by|\bx)|} \left(1 - e^{-n[\lambda-\tilde{\calD}_{\Delta_0}^n(\hat{P}_{\by}, P_0)]_+}\right)\nonumber\\
\le & \sum_{\by}  \sum_{\bx: d(\bx,\by) \le n\Delta_1} \frac{P_1(\bx)}{|\calT(\by|\bx)|} \left(1 - e^{-n[\lambda-\tilde{\calD}_{\Delta_0}^n(\hat{P}_{\by}, P_0)]_+}\right)\nonumber\\
= & \sum_{\by}  \sum_{\hat{P}_{\bx|\by}:  E_{\bx \by} d(X,Y) \le\Delta_1} |\calT(\hat{P}_{\bx|\by})| \frac{ e^{-n[\hat{H}_{\bx}(X)+
\calD(\hat{P}_{\bx}\|P_1)]}}{|\calT(\hat{P}_{\by|\bx})|}  \left(1 - e^{-n[\lambda-\tilde{\calD}_{\Delta_0}^n(\hat{P}_{\by}, P_0)]_+}\right)\nonumber\\
= & \sum_{\hat{P}_{\by}}  \sum_{\hat{P}_{\bx|\by}:  E_{\bx \by} d(X,Y) \le\Delta_1} |\calT(\hat{P}_{\bx})|  e^{-n[\hat{H}_{\bx}(X)+
\calD(\hat{P}_{\bx}\|P_1)]} \left(1 - e^{-n[\lambda-\tilde{\calD}_{\Delta_0}^n(\hat{P}_{\by}, P_0)]_+}\right)\nonumber\\
= & \sum_{\hat{P}_{\by}}  \sum_{\hat{P}_{\bx|\by}:  E_{\bx \by} d(X,Y)\le\Delta_1} e^{-n
\calD(\hat{P}_{\bx}\|P_1)} \left(1 - e^{-n[\lambda-\tilde{\calD}_{\Delta_0}^n(\hat{P}_{\by}, P_0)]_+}\right)\nonumber\\
= & \sum_{\hat{P}_{\by}: \tilde{\calD}_{\Delta_0}^n(\hat{P}_{\by} , P_0) < \lambda}  \sum_{\hat{P}_{\bx|\by}:  E_{\bx \by} d(X,Y) \le\Delta_1} e^{-n
\calD(\hat{P}_{\bx}\|P_1)} \left(1 - e^{-n(\lambda-\tilde{\calD}_{\Delta_0}^n(\hat{P}_{\by}, P_0))}\right)\nonumber\\
\le & \sum_{\hat{P}_{\by}: \tilde{\calD}_{\Delta_0}^n(\hat{P}_{\by} , P_0) < \lambda}  \sum_{\hat{P}_{\bx|\by}:  E_{\bx \by} d(X,Y) \le\Delta_1} e^{-n
\calD(\hat{P}_{\bx}\|P_1)}\nonumber\\
\le &  (n + 1)^{2|\calA|\cdot (|\calA| - 1)}   \exp\left\{- n \underset{\hat{P}_{\by}:\tilde{\calD}_{\Delta_0}^n(\hat{P}_{\by} , P_0) < \lambda}{\min} \left[ \underset{\hat{P}_{\bx|\by}: E_{\bx \by} d(X,Y) \le\Delta_1}{\min} \mathcal{D}(\hat{P}_{\bx} \|P_1) \right]\right\}\nonumber\\
\le &  (n + 1)^{2|\calA|\cdot (|\calA| - 1)}   \exp\left\{- n \underset{P_Y: \tilde{\calD}_{\Delta_0}(P_Y , P_0) < \lambda}{\inf} \left[ \underset{P_{X|Y}: E_{XY} d(X,Y) \le \Delta_1}{\min} \mathcal{D}(P_X \|P_1) \right]\right\}\nonumber\\
\le &  (n + 1)^{2|\calA|\cdot (|\calA| - 1)}   \exp\left\{- n \underset{P_Y: \tilde{\calD}_{\Delta_0}(P_Y, P_0) \le \lambda}{\min} \left[ \underset{P_{X|Y}: E_{XY} d(X,Y) \le \Delta_1}{\min} \mathcal{D}(P_X \|P_1) \right]\right\}.
\end{align}
%
%
%

Then:
\begin{equation}
\limsup_{n \rightarrow \infty} \frac{1}{n} \ln P_{\mbox{\tiny
FN}}(\Phi^*, A_{\Delta_1}^*) \le - \underset{P_Y: \tilde{\calD}_{\Delta_0}(P_Y , P_0) \le \lambda}{\min} \left[ \underset{P_{X|Y}: E_{XY} d(X,Y)  \le \Delta_1}{\min} \mathcal{D}(P_X \|P_1) \right].
\end{equation}
We now move on to the lower bound.
\begin{align}
\label{derivation_opt_D}
P_{\mbox{\tiny
FN}}(\Phi^*, A_{\Delta_1}^*) =  & \sum_{\bx,\by}P_1(\bx)A_{\Delta_1}^*(\by|\bx)\Phi^*(\calH_1|\by)\nonumber\\
\ge & (n+1)^{-|\calA|\cdot(|\calA|-1)}   \sum_{\by}  \sum_{\bx: d(\bx,\by) \le n\Delta_1} \frac{P_1(\bx)}{|\calT(\by|\bx)|} \left(1 - e^{-n[\lambda-\tilde{\calD}_{\Delta_0}^n(\hat{P}_{\by}, P_0)]_+}\right)\nonumber\\
= &  (n+1)^{-|\calA|\cdot(|\calA|-1)}   \sum_{\hat{P}_{\by} : \tilde{\calD}_{\Delta_0}^n(\hat{P}_{\by} , P_0) < \lambda}  \sum_{\hat{P}_{\bx|\by}:  E_{\bx \by} d(X,Y) \le\Delta_1}  e^{- n \mathcal{D}(\hat{P}_{\bx} \|P_1)} \left(1 - e^{-n(\lambda-\tilde{\calD}_{\Delta_0}^n(\hat{P}_{\by}, P_0))}\right)\nonumber\\
\ge & (n+1)^{-|\calA|\cdot(|\calA|-1)}    e^{- n \mathcal{D}(\hat{P}_{\bx} \|P_1)} (1 - e^{-n(\lambda-\tilde{\calD}_{\Delta_0}^n(\hat{P}_{\by}, P_0))}),
\end{align}
where, for a fixed $n$,  $\hat{P}_{\by}$ is a  PMF  that satisfies $\tilde{\calD}_{\Delta_0}^n(\hat{P}_{\by}, P_0) \le \lambda -  (\ln n)/ n$
and $\hat{P}_{\bx|\by}$ is such that the distortion constraint is satisfied.
Since the set of rational PMFs is dense in the probability simplex, two such sequences can be chosen
in such a way that $(\hat{P}_{\by}, \hat{P}_{\bx|\by}) \rightarrow (P_Y^*,P_{X|Y}^*)$,\footnote{We are implicitly exploiting the fact that set  $\{\tilde{\calD}_{\Delta_0}^n(\hat{P}_{\by}, P_0) < \lambda\}$ is dense in $\{\tilde{\calD}_{\Delta_0}(P_Y, P_0) \le \lambda\}$, for every $\lambda > 0$, which holds true from Property \ref{property_convex}.} where
\begin{equation}
 (P_Y^*,P_{X|Y}^*) = \underset{(P_Y,P_{X|Y})}{\arg \min} \hspace{0.2cm} \underset{P_Y: \tilde{\calD}_{\Delta_0}(P_Y, P_0) \le \lambda}{\min} \left[ \underset{P_{X|Y}: E_{XY} d(X,Y)  \le \Delta_1}{\min} \mathcal{D}(P_X \|P_1) \right].
\end{equation}
%
%
Therefore, we can assert that:
\begin{align}
\liminf_{n \rightarrow \infty} \frac{1}{n} \ln P_{\mbox{\tiny
FN}}(\Phi^*, A_{\Delta_1}^*) \ge &  \lim_{n \rightarrow \infty} \frac{1}{n} \ln \left[e^{- n \mathcal{D}(\hat{P}_{\bx} \|P_1)} \left(1 - e^{-n(\lambda- \tilde{\calD}_{\Delta_0}^n (\hat{P}_{\by}, P_0))}\right)\right] \nonumber \\
 = &-  \lim_{n \rightarrow \infty}  \mathcal{D}(\hat{P}_{\bx} \|P_1) \nonumber\\
  = &-   \mathcal{D}(P_X^* \|P_1) \nonumber\\
 = & - \underset{P_Y:  \tilde{\calD}_{\Delta_0}(P_Y \|P_0) \le \lambda}{\min} \left[ \underset{P_{X|Y}: E_{XY} d(X,Y) \le \Delta_1}{\min} \mathcal{D}(P_X \|P_1) \right].
\end{align}
By combining the upper and lower bounds, we conclude that
$\lim \sup$ and $\lim \inf$ coincide.
Therefore the limit of the sequence $1/n \ln P_{\mbox{\tiny
FN}}$ exists and the theorem is proven.

\subsection{Proof of Theorem \ref{theo_e_fn_d}}
\label{Appx.theo_e_fn_d}

First, observe that, by exploiting the definition of $\tilde{\calD}_{\Delta}$, \eqref{error_expon_2} can be rewritten
as
\begin{align}
\label{error_expon_2_var1}
\varepsilon_{\mbox{\tiny
FN}} =  & \underset{P_Y: \tilde{\calD}_{\Delta_0}(P_Y, P_0) \le \lambda}{\min} \left(\underset{P_{X|Y}: E_{XY} d(X,Y) \le \Delta_1}{\min} \mathcal{D}(P_X \| P_1)\right) \nonumber\\
 = & \underset{P_Z: \calD(P_Z \| P_0) \le \lambda}{\min} \hspace{0.1cm} \underset{P_{Y|Z}:E_{YZ} d(Y,Z) \le \Delta_0}{\min} \left(\underset{P_{X|Y}: E_{XY} d(X,Y) \le \Delta_1}{\min} \mathcal{D}(P_X \| P_1)\right).
\end{align}
%

To prove the theorem, we now show that \eqref{error_expon_2_var1} can be simplified as follows:
\begin{align}
\label{error_expon_2_var2}
\varepsilon_{\mbox{\tiny
FN}} = &  \underset{P_{Z}: \mathcal{D}(P_Z \|P_0) \le \lambda}{\min} \left( \underset{P_{X|Z} : E_{XZ}d(X,Z) \le \Delta_0 + \Delta_1}{\min} \mathcal{D}(P_X \|P_1)\right),
\end{align}
which is equivalent to
\eqref{error_expon_2_var} (see Section \ref{sec.DG_not_sym}).
The equivalence of the expressions in \eqref{error_expon_2_var1} and \eqref{error_expon_2_var2} follows from the equivalence of the two feasible sets for the PMF $P_X$. We first show that any feasible $P_X$ in \eqref{error_expon_2_var1} is also feasible in \eqref{error_expon_2_var2}. Let then $P_X$ be a feasible PMF in \eqref{error_expon_2_var1}.
By exploiting the properties of the triangular inequality property of the  distance, we have that, regardless of the specific choice of the distributions $P_{Y|Z}$ and $P_{X|Y}$ in \eqref{error_expon_2_var1},
\begin{equation}
E_{XZ}d(X,Z) \le E_{XYZ}[d(X,Y) + d(Y,Z)] =  E_{XY}d(X,Y) + E_{YZ}d(Y,Z) \le \Delta_0 + \Delta_1,
\end{equation}
and then $P_X$ is a feasible PMF in \eqref{error_expon_2_var2}.
To prove the opposite inclusion, we observe that, for any $P_Z$ and $P_{X|Z}$  such that $\mathcal{D}(P_Z \|P_0) \le \lambda$ and $E_{XZ} d(X,Z) \le \Delta_0 + \Delta_1$, it is possible to define a variable $Y$, and then two conditional PMFs $P_{Y|Z}$ and $P_{X|Y}$, such that $E_{XY} d(X,Y) \le \Delta_1$ and  $E_{YZ} d(Y,Z) \le \Delta_0$. To do so, it is sufficient to let $P_Y$ be the convex combination of $P_{X}$ and $P_Z$, that is $P_Y = \alpha P_X + (1-\alpha)P_Z$ where $\alpha = \Delta_0/(\Delta_0 + \Delta_1)$. With this choice for the marginal, we can define $P_{X|Y}$ so that $P_{XY}$ satisfies\footnote{By adopting the transportation theory perspective introduced towards the end of Section \ref{sec.SA_limitingPerf},  we can look at $P_X$ and $P_Y$ as two ways of piling up a certain amount of soil; then $P_{XY}$ can be interpreted as a map which moves $P_X$ into $P_Y$ ($P_{XY}(i,j)$ corresponds to the amount of soil moved from position $i$ to $j$). The map in \eqref{clever_joint} is the one which leaves in place a percentage $\alpha$ of the mass and moves the remaining $(1-\alpha)$ percentage to fill the pile $(1-\alpha)P_Z$ according to map $(1-\alpha) P_{XZ}$.}
\begin{align}
\label{clever_joint}
&P_{XY}(i,j) = (1-\alpha)P_{XZ}(i,j) \quad  \forall i, \forall j \neq i,\nonumber\\
&P_{XY}(i,i) = (1-\alpha)P_{XZ}(i,i) + \alpha P_X(i)  \quad \forall i;
\end{align}
similarly, $P_{Y|Z}$ can be chosen such that $P_{YZ}$ satisfies
\begin{align}
\label{clever_joint2}
&P_{YZ}(i,j) = \alpha P_{XZ}(i,j)   \quad \forall i, \forall j \neq i,\nonumber\\
&P_{YZ}(i,i) =  \alpha P_{XZ}(i,i) + (1-\alpha) P_Z(i)  \quad \forall i.
\end{align}
It is easy to see that, with the above choices, $E_{XY} d(X,Y) = (1-\alpha) E_{XZ}d(X,Z)$ and $E_{YZ} d(Y,Z) = \alpha E_{XZ}d(X,Z)$.
Then, $E_{XY} d(X,Y) \le  (1-\alpha)(\Delta_0 + \Delta_1) \le \Delta_1$ and $E_{YZ} d(Y,Z) \le \Delta_0$.
Consequently, any $P_X$ belonging to the set in \eqref{error_expon_2_var2} also belongs to the one in \eqref{error_expon_2_var1}.

\section{Bayesian detection game}
\label{App-Bgame}

This appendix contains the proofs for Section \ref{sec.SA_B}.

\subsection{Proof of Theorem \ref{theo.Bayesian}}
\label{Appx.theo.Bayesian}

Given the probability distributions $Q_0(\by)$ and
$Q_1(\by)$ induced by $A_{\Delta_0}^*$ and $A_{\Delta_1}^*$ respectively,
the optimum decision rule is deterministic and is given by the
likelihood ratio test (LRT) \cite{DetEst}:
\begin{equation}
\label{LRT}
\frac{1}{n} \ln
\frac{Q_1(\by)}{Q_0(\by)}\underset{\calH_0}{\overset{\calH_1}{\gtrless}} a,
\end{equation}
which proves the optimality of the decision rule in
\eqref{optimum_defence_SA-2_1}.

To prove the asymptotic optimality of the decision rule in \eqref{optimum_defence_SA-2_2}, let us approximate $Q_0(\by)$ and $Q_1(\by)$ using the method of types as follows:
\begin{eqnarray}
Q_0(\by)&=&\sum_{\bx}P_0(\bx)A_{\Delta_0}^*(\by|\bx)\nonumber\\
&\exe&\sum_{\bx:~d(\bx,\by)\le n\Delta_0} e^{-n[\hat{H}_{\bx}(X)+
\calD(\hat{P}_{\bx}\|P_0)]}\cdot e^{-n\hat{H}_{\bx\by}(Y|X)}\nonumber\\
&\exe&\max_{\bx:~d(\bx,\by)\le n\Delta_0} e^{n\hat{H}_{\bx\by}(X|Y)}\cdot
\left(e^{-n[\hat{H}_{\bx}(X)+
\calD(\hat{P}_{\bx}\|P_0)]} \right.\nonumber\\
& & \hspace{4.5cm} \left.\cdot e^{-n\hat{H}_{\bx\by}(Y|X)}\right)\nonumber\\
&=&\max_{\bx:~d(\bx,\by)\le n\Delta_0}
e^{-n[\hat{H}_{\by}(Y)+
\calD(\hat{P}_{\bx}\|P_0)]}\nonumber\\
&\stackrel{(a)} = &\exp\left\{-n\left[\hat{H}_{\by}(Y)+ \right. \right.\nonumber\\
& & \left.\left. \hspace{1.5cm} + \min_{\{\hat{P}_{\bx|\by}:E_{\bx \by} d(X,Y)\le\Delta_0\}}
\calD(\hat{P}_{\bx}\|P_0)\right]\right\}\nonumber\\
&=&\exp\left\{-n[\hat{H}_{\by}(Y)+
\tilde{\calD}_{\Delta_0}^n(\hat{P}_{\by},P_0)]\right\},
\end{eqnarray}
where in $(a)$ we exploited the additivity of the distortion function $d$.
Similarly,
\begin{equation}
Q_1(\by)\exe
\exp\left\{-n[\hat{H}_{\by}(Y)+\tilde{\calD}_{\Delta_1}^n(\hat{P}_{\by},P_1)]\right\}.
\end{equation}
Thus, we have the following asymptotic approximation to the LRT:
\begin{equation}
\tilde{\calD}_{\Delta_0}^n(\hat{P}_{\by},P_0)-
\tilde{\calD}_{\Delta_1}^n(\hat{P}_{\by},P_1)
\underset{\calH_0}{\overset{\calH_1}{\gtrless}} a,
\end{equation}
which proves the second part of the theorem.

\subsection{Proof of Theorem \ref{theo_e.e_B}}
\label{Appx.theo.e.e.B}

In order to make
 the expression of
$u(P_{\mbox{\tiny FN}}(\Phi^{\dag},(A_{\Delta_0}^*, A_{\Delta_1}^*)))$
explicit, let us first evaluate the two error probabilities at equilibrium.
%
%
Below, we derive the
lower and upper bound  on the probability of
$\by$ under $\calH_1$, when the attack channel is $A_{\Delta_1}^{*}$:
\begin{equation}
(n+1)^{-|\calA \|\calA - 1|} e^{-n[\hat{H}_{\by}(Y)+
\tilde{\calD}_{\Delta_1}^n(\hat{P}_{\by},P_1)]} \le Q_1^*(\by) < (n+1)^{|\calA|^2} e^{-n[\hat{H}_{\by}(Y)+
\tilde{\calD}_{\Delta_1}^n(\hat{P}_{\by},P_1)]}.
\end{equation}
The same bounds hold for $Q_0^*(\by)$,
with $\tilde{\calD}_{\Delta_0}$  replacing $\tilde{\calD}_{\Delta_0}$.
For the FN probability,
the upper bound  is
%
\begin{align}
\label{UB_P_FN}
P_{\mbox{\tiny FN}}(\Phi^{\dag},A_{\Delta_1}^*) = & \sum_{\by} Q_1^*(\by) \cdot \Phi^{\dag}(\calH_0|\by) \nonumber\\
= & \sum_{\by: \tilde{\calD}_{\Delta_0}^n(\hat{P}_{\by},P_0) - \tilde{\calD}_{\Delta_1}^n(\hat{P}_{\by},P_1) < a} Q_1^*(\by) \nonumber\\
\le & (n+1)^{|\calA|^2} \sum_{\by: \tilde{\calD}_{\Delta_0}^n(\hat{P}_{\by},P_0) - \tilde{\calD}_{\Delta_1}^n(\hat{P}_{\by},P_1) < a} e^{-n[\hat{H}_{\by}  +
\tilde{\calD}_{\Delta_1}^n(\hat{P}_{\by},P_1)]}\nonumber\\
\le & (n+1)^{|\calA|^2 + |\calA|} \max_{\hat{P}_{\by}: \tilde{\calD}_{\Delta_0}^n(\hat{P}_{\by},P_0) - \tilde{\calD}_{\Delta_1}^n(\hat{P}_{\by},P_1) < a} e^{-n
\tilde{\calD}_{\Delta_1}^n(\hat{P}_{\by},P_1)}\nonumber\\
=  &  (n+1)^{|\calA|^2 + |\calA|} \exp\left\{-n\left(\min_{\hat{P}_{\by}: \tilde{\calD}_{\Delta_0}^n(\hat{P}_{\by},P_0) - \tilde{\calD}_{\Delta_1}^n(\hat{P}_{\by},P_1) < a}
\tilde{\calD}_{\Delta_1}^n(\hat{P}_{\by},P_1)\right)\right\}.
\end{align}
Then,
\begin{align}
\label{limsup}
 - \limsup_{n\rightarrow \infty} \frac{1}{n} \ln(P_{\mbox{\tiny FN}}(\Phi^{\dag},A_{\Delta_1}^*))
  \le \min_{P_Y: \tilde{\calD}_{\Delta_0}(P_Y,P_0) - \tilde{\calD}_{\Delta_1}(P_Y,P_1) \le a}
\tilde{\calD}_{\Delta_1}(P_Y,P_1).
\end{align}
For the lower bound,
\begin{align}
\label{LB_P_FN}
P_{\mbox{\tiny FN}}(\Phi^{\dag},A_{\Delta_1}^*) \ge & (n+1)^{- |\calA \|\calA-1|} \sum_{\by: \tilde{\calD}_{\Delta_0}^n(\hat{P}_{\by},P_0) - \tilde{\calD}_{\Delta_1}^n(\hat{P}_{\by},P_1) < a} e^{-n[\hat{H}_{\by}  +
\tilde{\calD}_{\Delta_1}^n(\hat{P}_{\by},P_1)]}\nonumber\\
 \ge & (n+1)^{- |\calA \|\calA-1|}  \max_{\hat{P}_{\by}: \tilde{\calD}_{\Delta_0}^n(\hat{P}_{\by},P_0) - \tilde{\calD}_{\Delta_1}^n(\hat{P}_{\by},P_1) < a} e^{-n
\tilde{\calD}_{\Delta_1}^n(\hat{P}_{\by},P_1)}\nonumber\\
=  &  (n+1)^{- |\calA \|\calA-1|}  \exp\left\{-n\left(\min_{\hat{P}_{\by}: \tilde{\calD}_{\Delta_0}^n(\hat{P}_{\by},P_0) - \tilde{\calD}_{\Delta_1}^n(\hat{P}_{\by},P_1) < a}
\tilde{\calD}_{\Delta_1}^n(\hat{P}_{\by},P_1)\right)\right\}.
\end{align}
Then
\begin{align}
\label{liminf}
- \liminf_{n\rightarrow \infty} \frac{1}{n} \ln(P_{\mbox{\tiny FN}}(\Phi^{\dag},A_{\Delta_1}^*)) \ge & \lim_{n \rightarrow \infty} \tilde{\calD}_{\Delta_1}^n(\hat{P}_{\by},P_1) \nonumber\\
=  & \min_{P_Y: \tilde{\calD}_{\Delta_0}(P_Y,P_0) - \tilde{\calD}_{\Delta_1}(P_Y,P_1) \le a}
\tilde{\calD}_{\Delta_1}(P_Y,P_1),
\end{align}
where $\hat{P}_{\by}$ is a properly chosen PMF, belonging to the set $\{\tilde{\calD}_{\Delta_0}^n(\hat{P}_{\by},P_0) - \tilde{\calD}_{\Delta_1}^n(\hat{P}_{\by},P_1) < a\}$ for every $n$, and  such that $\hat{P}_{\by} \rightarrow P_Y^*$ where\footnote{By Property \ref{property_convex}, set $\{\tilde{\calD}_{\Delta_0}^n(\hat{P}_{\by},P_0) - \tilde{\calD}_{\Delta_1}^n(\hat{P}_{\by},P_1) < a\}$ is dense in $\{P_Y: \tilde{\calD}_{\Delta_0}(P_Y,P_0) - \tilde{\calD}_{\Delta_1}(P_Y,P_1) \le a\}$ and then such a sequence of PMFs can always be found.}
\begin{equation}
P_Y^* = \arg\min_{P_Y: \tilde{\calD}_{\Delta_0}(P_Y,P_0) - \tilde{\calD}_{\Delta_1}(P_Y,P_1) \le a}
\tilde{\calD}_{\Delta_1}(P_Y,P_1).
\end{equation}
By combining \eqref{limsup} and \eqref{liminf}, we get
%
\begin{align}
\label{epsilon_FN_Bayes}
\varepsilon_{\mbox{\tiny FN}} = - \lim_{n\rightarrow \infty} \frac{1}{n} \ln(P_{\mbox{\tiny FN}}(\Phi^{\dag},A_{\Delta_1}^*)) = \min_{P_Y: \tilde{\calD}_{\Delta_0}(P_Y,P_0) - \tilde{\calD}_{\Delta_1}(P_Y,P_1) \le a}
\tilde{\calD}_{\Delta_1}(P_Y,P_1).
\end{align}
Therefore, from \eqref{UB_P_FN} and \eqref{LB_P_FN}  we have
\begin{align}
\label{epsilon_FN_Bayes}
P_{\mbox{\tiny FN}}(\Phi^{\dag},A_{\Delta_1}^*) \doteq  \exp\left\{-n\left(\min_{\hat{P}_{\by}: \tilde{\calD}_{\Delta_0}^n(\hat{P}_{\by},P_0) - \tilde{\calD}_{\Delta_1}^n(\hat{P}_{\by},P_1) < a}
\tilde{\calD}_{\Delta_1}^n(\hat{P}_{\by},P_1)\right)\right\},
\end{align}
and the limit of $\frac{1}{n} \ln P_{\mbox{\tiny FN}}$ exists and is finite.

Similar bounds can be derived for the FP probability, resulting in
%
\begin{align}
\label{epsilon_FP_Bayes}
P_{\mbox{\tiny FP}}(\Phi^*,A_{\Delta_0}^*)  \doteq \exp\left\{-n\left(\min_{\hat{P}_{\by}: \tilde{\calD}_{\Delta_0}^n(\hat{P}_{\by},P_0) - \tilde{\calD}_{\Delta_1}^n(\hat{P}_{\by},P_1) \ge a}
\tilde{\calD}_{\Delta_0}^n(\hat{P}_{\by},P_0)\right)\right\},
\end{align}
and in particular
%
\begin{align}
\label{epsilon_FP_Bayes}
\varepsilon_{\mbox{\tiny FP}} = - \lim_{n\rightarrow \infty} \frac{1}{n} \ln(P_{\mbox{\tiny FP}}(\Phi^*,A_{\Delta_0}^*)) = \min_{P_Y: \tilde{\calD}_{\Delta_0}(P_Y,P_0) - \tilde{\calD}_{\Delta_1}(P_Y,P_1) \ge a}
\tilde{\calD}_{\Delta_0}(P_Y,P_0).
\end{align}
From  \eqref{epsilon_FP_Bayes}, we see that, as argued,
the profile $(\Phi^{\dag},(A_{\Delta_0}^*, A_{\Delta_1}^*))$ leads to a FP exponent always
at least as large as $a$.

We are now ready to evaluate the asymptotic behavior of the payoff of the Bayesian detection  game:
%
%
\begin{align}
u = & P_{\mbox{\tiny FN}}(\Phi^{\dag},A_{\Delta_1}^*) + e^{a n} P_{\mbox{\tiny FP}}(\Phi^{\dag},A_{\Delta_0}^*)\nonumber\\
\doteq & \max \{P_{\mbox{\tiny FN}}(\Phi^{\dag},A_{\Delta_1}^*), e^{a n} P_{\mbox{\tiny FP}}(\Phi^{\dag},A_{\Delta_0}^*)\}\nonumber\\
%
%
\doteq & \exp\left\{-n \min \left(\min_{\hat{P}_{\by}: \tilde{\calD}_{\Delta_0}^n(\hat{P}_{\by},P_0) - \tilde{\calD}_{\Delta_1}^n(\hat{P}_{\by},P_1) < a}
\tilde{\calD}_{\Delta_1}^n(\hat{P}_{\by},P_1), \min_{\hat{P}_{\by}: \tilde{\calD}_{\Delta_0}^n(\hat{P}_{\by},P_0) - \tilde{\calD}_{\Delta_1}^n(\hat{P}_{\by},P_1) \ge a}
(\tilde{\calD}_{\Delta_0}^n(\hat{P}_{\by},P_0) - a)\right)\right\}\nonumber\\
= & \exp\left\{-n \min_{P_{\by}} \left(\max\left\{
\tilde{\calD}_{\Delta_1}^n(\hat{P}_{\by},P_1), (\tilde{\calD}_{\Delta_0}^n(\hat{P}_{\by},P_0) - a)\right\}\right)\right\}\nonumber\\
\doteq & \exp\left\{-n \min_{P_Y} \left(\max\left\{
\tilde{\calD}_{\Delta_1}(P_Y,P_1), (\tilde{\calD}_{\Delta_0}(P_Y,P_0) - a)\right\}\right)\right\},
%
\label{infimum_u}
\end{align}
where the asymptotic equality in the last line follows from the density of the set of empirical probability distributions of $n$-length sequences in the probability simplex and from the continuity of the to-be-minimized expression in round brackets as a function of $P_Y$.
%

\section{Source distinguishability}
\label{App-SourceDist}

This appendix contains the proofs for Section \ref{sec.SA_limitingPerf}.

\subsection{Proof of Theorem \ref{theorem_EMD_FA_B}}
\label{Appx.theorem_EMD_FA_B}

The theorem directly follows from Theorem \ref{theo_e.e_B}. In fact, by letting
\begin{equation}
e_a(P_Y) =  \max\left\{
\tilde{\calD}_{\Delta_1}(P_Y,P_1), \tilde{\calD}_{\Delta_0}(P_Y,P_0) - a\right\},
\end{equation}
$a \ge 0$, the limit in \eqref{best_e_e_B} can be derived as follows:
\begin{align}
  \lim_{a \rightarrow 0} \hspace{0.1cm} \min_{P_Y} e_a(P_Y) & =  \min_{P_Y} \hspace{0.1cm} \lim_{a \rightarrow 0} e_a(P_Y) \nonumber\\
 & =  \min_{P_Y} \left(\max\left\{
\tilde{\calD}_{\Delta_1}(P_Y,P_1), \tilde{\calD}_{\Delta_0}(P_Y,P_0)\right\}\right),
\end{align}
where the order of limit and minimum can be exchanged because of the uniform convergence of $e_a(P_Y)$ to $e_{0}(P_Y)$ as $a$ tends to $0$.

\subsection{Proof of Corollary \ref{cor_Gamma_d_metric}}
\label{Appx.cor_Gamma_d_metric}

The corollary can be proven by exploiting the fact that, when $d$ is a
metric, the {\em EMD} is a metric and then
$\text{\em EMD}_d(P_0, P)$ satisfies the triangular inequality. In this case, it is easy to argue that  the $P_Y$ achieving the minimum in \eqref{A_winning_region} is the one for which the triangular relation holds at the equality, which corresponds to the convex combination of $P_0$ and $P$ (i.e., the PMF lying on the straight line between $P_0$ and $P$) with combination coefficient $\alpha$ such that $\text{\em EMD}_d(P_0, P_Y)$ (or equivalently, by symmetry, $\text{\em EMD}_d(P_Y, P_0)$) is exactly equal to $\Delta_0$.

Formally,  let $X \sim P_0$ and $Z \sim P$.
We want to find the PMF $P_Y$ which solves
\begin{equation}
\label{ref_rewriting}
\min_{P_Y : \text{\em EMD}_{d}(P_Y,P_0) \le \Delta_0} \text{\em EMD}_{d}(P_Y,P).
\end{equation}
For any $Y \sim P_Y$ and any choice of $P_{XY}$ and $P_{YZ}$ (that is,  $P_{Y|X}$ and $P_{Z|Y}$), by exploiting the triangular inequality property of the distance,  we can write
\begin{align}
E_{XZ} d(X,Z) \le E_{XY} d(X,Y) + E_{YZ} d(Y,Z),
\end{align}
where $P_{XZ}$ can be any joint distribution with marginals $P_0$ and $P$. Then,
\begin{equation}
\text{\em EMD}(P_0, P) \le E_{XY} d(X,Y) + E_{YZ} d(Y,Z).
\end{equation}
From the arbitrariness of the choice of $P_{XY}$ and $P_{YZ}$, if we let
 $P_{XY}^*$ and $P_{YZ}^*$ be the joint distributions achieving the {\em EMD} between $X$ and $Y$, and $Y$ and $Z$, we get
%
\begin{equation}
\label{relation_between_EMD}
\text{\em EMD}(P_0, P) \le \text{\em EMD}(P_0, P_Y)  + \text{\em EMD}(P_Y, P).
\end{equation}
From the above relation, we can derive the following lower bound for the to-be-minimized quantity in  \eqref{ref_rewriting}:
\begin{align}
\label{lower_bound}
\text{\em EMD}(P_Y, P) \ge & \text{\em EMD}(P_0, P) - \text{\em EMD}(P_0, P_Y)\\
\ge & \text{\em EMD}(P_0, P) - \Delta_0. \label{lower_bound2}
\end{align}
We now show that $P_Y$ defined as in \eqref{optimumPY} achieves the above lower bound while satisfying the constraint $\text{\em EMD}(P_0, P_Y) \le \Delta_0$, and then gets the minimum value in \eqref{ref_rewriting}.

Let $P_{XZ}^*$ be the joint distribution achieving the \text{\em EMD} between $X$ and $Z$. Then,  $E_{XZ}^* d(X,Z) = \text{\em EMD}(P_0, P)$ (where the star on the apex indicates that the expectation is taken under $P_{XZ}^*$).
Given the marginal $P_Y = \alpha P_0 + (1-\alpha)P$, we can define $P_{XY}$ and $P_{YZ}$, starting from $P_{XZ}^*$, as in the proof of  Theorem \ref{theo_e_fn_d} (\eqref{clever_joint} and \eqref{clever_joint2}). With this choice,
$E_{XY} d(X,Y) = (1-\alpha) \text{\em EMD}(P_0, P)$ and $E_{YZ} d(Y,Z) = \alpha \text{\em EMD}(P_0, P)$. Then, for the value of $\alpha$ in \eqref{optimumPY} we have that
$E_{XY} d(X,Y) = \Delta_0$ and 
\begin{equation}
E_{YZ} d(Y,Z) =  \text{\em EMD}(P_0, P) - \Delta_0. \label{key_relation2}
\end{equation}
%
By combining \eqref{key_relation2} and \eqref{lower_bound2}, we argue that $\text{\em EMD}(P_Y, P) = \text{\em EMD}(P_0, P) - \Delta_0$\footnote{We also argue that the choice made for $P_{YZ}$ minimizes the expected distortion between $Y$ and $Z$, i.e., it yields $E_{YZ} d(Y,Z) = \text{\em EMD}(P_Y, P)$. Furthermore, being $E_{XY} d(X,Y) = \Delta_0$, it holds $\text{\em EMD}(P_Y, P) = \text{\em EMD}(P_0, P) - E_{XY} d(X,Y)$ and then, from the triangular inequality in \eqref{relation_between_EMD}, it follows that  $\text{\em EMD}(P_0, P_Y) = E_{XY} d(X,Y)= \Delta_0$.}.
Therefore, $P_Y$ in \eqref{optimumPY} solves  \eqref{ref_rewriting}.
%

To prove the second part of the corollary, we just need to observe that a PMF $P$ belongs to the indistinguishability set in \eqref{A_winning_region} if and only if
\begin{equation}
\text{\em EMD}(P_Y, P) =  \text{\em EMD}(P_0, P) - \Delta_0 \le \Delta_1,
\end{equation}
that is $\text{\em EMD}(P_0, P) \le \Delta_0 + \Delta_1$.\\

From the above proof, we notice that,  for any $P$ in the set in \eqref{A_winning_region_d}, i.e., such that ${\text{\em EMD}}_d(P_0,P) \le \Delta_0 + \Delta_1$, the PMF $P_Y= \alpha P_0 + (1-\alpha)P$ with $\alpha$ as in \eqref{optimumPY} satisfies  {\em EMD}$(P_Y,P_0) = \Delta_0$ and {\em EMD}$(P_Y,P_1) = \Delta_1$ for any choice of $d$. Then, when $d$ is not a metric, the region  in \eqref{A_winning_region_d} is contained in the indistinguishability region.

\subsection{Proof of Corollary \ref{cor_L_p_p}}
\label{Appx.cor_L_p_p}

By inspecting the minimization in \eqref{A_winning_region}, we see that for any source $P$ that cannot be distinguished from $P_0$, it is possible to find a source $P_Y$ such that $\text{\em EMD}_{d}(P_Y,P) \le \Delta_1$ and $\text{\em EMD}_{d}(P_Y,P_0) \le \Delta_0$. In order to prove the corollary, we need to show that such $P$ lies inside the set defined in \eqref{A_winning_region_L_2}.
%

We give the following definition.
Given two random variables $X$ and $Y$, the H{\"o}lder inequality applied to the expectation function (\cite{keyInequalities}) reads:
\begin{equation}
E_{XY} |XY| \le \big(E_X [|X|^r]\big)^{1/r} \big(E_Y [|Y|^q]\big)^{1/q},
\end{equation}
where $r \ge 1$ and $q = r/(r-1)$, namely, the H{\"o}lder conjugate of $r$.

We use the notation $E_{XY}^*$  for the expectation of the pair $(X,Y)$ when the probability map is the one achieving the $\text{\em EMD}(P_X, P_Y)$, namely $P_{XZ}^*$.
Then, we can write:
\begin{align}
\text{\em EMD}_{L_p^p}(P_0, P) & =  E_{XZ}^* [||X - Z||^p]\nonumber\\
& {\stackrel{(a)}{\le}} E_{XYZ}^* [(||X - Y|| + ||Y - Z||)^p]\nonumber\\
& {\stackrel{(b)}{\le}} E_{XYZ}\left[||X - Y||^p + ||Y - Z||^p + p \cdot ||X - Y||^{p-1} \hspace{0.1cm}||Y - Z|| +\right. \nonumber\\
& \left. \hspace{1.5cm} + p(p -1)/2 \cdot ||X - Y||^{p-2} \hspace{0.1cm}||Y - Z||^2 + ..... + p \cdot||X - Y|| \hspace{0.1cm}||Y - Z||^{p-1}\right]\nonumber\\
& = E_{XYZ}[||X - Y||^p] + E_{XYZ}[||Y - Z||^p] + p \cdot E_{XYZ}[||X - Y||^{p-1} \hspace{0.1cm}||Y - Z||] + \nonumber\\
&  \hspace{1cm} + p(p -1)/2 \cdot E_{XYZ}[||X - Y||^{p-2} \hspace{0.1cm}||Y - Z||^2] + ..... + p \cdot E_{XYZ}[||X - Y|| \hspace{0.1cm}||Y - Z||^{p-1}]\nonumber\\
& {\stackrel{(c)}{\le}} E_{XYZ}[||X - Y||^p] + E_{XYZ}[||Y - Z||^p] + p \cdot {(E_{XYZ}[||X - Y||^p])}^{\frac{p-1}{p}} (E_{XYZ}[||Y - Z||^p])^{\frac{1}{p}}\nonumber\\
&  \hspace{1cm} + p(p -1)/2 \cdot (E_{XYZ}[||X - Y||^{p})^{\frac{p-2}{p}} (E_{XYZ}[||Y - Z||^p])^{\frac{2}{p}} + ...\nonumber\\
& \hspace{3cm}... + p \cdot (E_{XYZ}[||X - Y||^p])^{\frac{1}{p}} (E_{XYZ}[||Y - Z||^{p}])^{\frac{p-1}{p}}\nonumber\\
& =  E_{XY}[||X - Y||^p] + E_{YZ}[||Y - Z||^p] + p \cdot {(E_{XY}[||X - Y||^p])}^{\frac{p-1}{p}} (E_{YZ}[||Y - Z||^p])^{\frac{1}{p}}\nonumber\\
&  \hspace{1cm} + p(p -1)/2 \cdot (E_{XY}[||X - Y||^{p}])^{\frac{p-2}{p}} (E_{YZ}[||Y - Z||^p])^{\frac{2}{p}} + ...\nonumber\\
& \hspace{3cm}... + p \cdot (E_{XY}[||X - Y||^p])^{\frac{1}{p}} (E_{YZ}[||Y - Z||^{p}])^{\frac{p-1}{p}}\nonumber\\
& = \left((E_{XYZ}[||X - Y||^p])^{1/p} + (E_{XYZ}[||Y - Z||^p])^{1/p}\right)^{p}\nonumber\\
& \le \left(\Delta_0^{1/p} + \Delta_1^{1/p}\right)^p,
\end{align}
where in $(a)$ we considered the joint distribution $P_{XYZ}$ such that
$\sum_{Z} P_{XYZ} = P_{XY}^*$, $\sum_{X} P_{XYZ} = P_{YZ}^*$ (and,
consequently, $\sum_{Y} P_{XYZ} = P_{XZ}^*$) and in $(b)$ we developed the
$p$-power of the binomial (binomial theorem). Finally, in $(c)$, we applied
the H\"older's inequality to the various terms of
Newton's binomial: specifically, for each term $E_{XYZ}[||X - Y||^{p-t} \hspace{0.1cm}||Y - Z||^t]$, with $t = 1,..,p-1$, the H\"older inequality is applied with $r = p/(p-t)$ (and $q = r/(r-1)$).

\bibliographystyle{IEEEtran}
\bibliography{A-D-game}

\end{document}